\definecolor{light-gray}{gray}{0.7}
\algrenewcommand\algorithmicindent{0.43em}
\newtheorem{theorem}{Theorem}[section]
\newtheorem{proposition}[theorem]{Proposition}
\newtheorem{lemma}[theorem]{Lemma}
\theoremstyle{definition}
\newtheorem{definition}[theorem]{Definition}
\DeclareMathOperator*{\argmax}{arg\,max}
\DeclareMathOperator*{\E}{\mathbb{E}}
\begin{document}
% The file aaai.sty is the style file for AAAI Press 
% proceedings, working notes, and technical reports.
%
\title{When Do Envy-Free Allocations Exist?}
%\author{Paper ID: 1604
%}
\author{Pasin Manurangsi\\
Department of EECS\\
UC Berkeley
\And 
Warut Suksompong\\
Department of Computer Science\\
University of Oxford}
\maketitle
\begin{abstract}
We consider a fair division setting in which $m$ indivisible items are to be allocated among $n$ agents, where the agents have additive utilities and the agents' utilities for individual items are independently sampled from a distribution. Previous work has shown that an envy-free allocation is likely to exist when $m=\Omega(n\log n)$ but not when $m=n+o(n)$, and left open the question of determining where the phase transition from non-existence to existence occurs. We show that, surprisingly, there is in fact no universal point of transition---instead, the transition is governed by the divisibility relation between $m$ and $n$. On the one hand, if $m$ is divisible by $n$, an envy-free allocation exists with high probability as long as $m\geq 2n$. On the other hand, if $m$ is not ``almost'' divisible by $n$, an envy-free allocation is unlikely to exist even when $m=\Theta(n\log n/\log\log n)$.
\end{abstract}

\section{Introduction}

Resource allocation is a fundamental task that occurs in a great number of everyday situations, from allocating school supplies to children and course slots in universities to students, to allocating machine processing time to users and kidneys to kidney transplant patients. One of the principal concerns when allocating resources to interested agents is \emph{fairness}: we want all agents to feel that they receive a fair share of the resources. There is a rich and beautiful theory of \emph{fair division} that goes back several decades and has been studied in mathematics, economics, and more recently in computer science \citep{BramsTa96,Moulin03}.

In order to reason about fairness, we must define when an allocation is considered to be ``fair''. One of the most prominent fairness notions is \emph{envy-freeness}, which means that every agent likes her allocated portion at least as much as that of any other agent \citep{Foley67,Varian74}. While an envy-free allocation can always be obtained when we allocate \emph{divisible} goods such as land or machine processing time \citep{Stromquist80}, this is not the case when it comes to allocating \emph{indivisible} goods like jewelry and artworks. Indeed, if a single bracelet or painting is to be divided between two agents, then no matter how the division is performed, the agent who does not receive the item will be left envying the other agent.

Given that the existence of envy-free allocations cannot be guaranteed in general for indivisible goods, an important question is therefore \emph{when} such allocations exist. \cite{DickersonGoKa14} investigated this question under a simple model where the agents have additive utilities and their utilities for individual items are drawn at random from probability distributions. If the number of items, $m$, is less than the number of agents, $n$, no envy-free allocation exists since any allocation necessarily leaves some agent empty-handed and envious. Dickerson et al. showed that even when the number of items slightly exceeds the number of agents---$m=n+o(n)$---an envy-free allocation is still unlikely to exist. However, as soon as the number of items is larger than the number of agents by a logarithmic factor---$m=\Omega(n\log n)$---an envy-free allocation exists with high probability, and can furthermore be obtained by simply giving each item to the agent with the highest utility for it. Dickerson et al. also found the phase transition from non-existence to existence to be quite sharp in computer experiments, and left open the question of determining where this transition occurs. Is the logarithmic factor in the upper bound necessary, or do we already have existence when, say, $m=1.001n$?

In this paper, we show that, surprisingly, there is in fact \emph{no} universal point of transition between non-existence and existence. Instead, the transition is governed by the divisibility relation between $m$ and $n$. On the one hand, if $m$ is divisible by $n$, we show that an envy-free allocation exists with high probability as long as $m\geq 2n$ (Theorem~\ref{thm:existence}). Our result improves upon the aforementioned $m=\Omega(n\log n)$ upper bound and moreover completely closes the gap for the case of divisibility, since Dickerson et al.'s lower bound already implies that the same result does not hold when $m=n$.\footnote{We do note, however, that Dickerson et al.'s upper bound holds under a weaker assumption on the distributions. For example, it does not assume that the utilities are drawn from the same distribution for all agents and items.} On the other hand, if $m$ is not ``almost'' divisible by $n$, in the sense that the remainder of the division is between $n^\epsilon$ and $n-n^\epsilon$ for some constant $\epsilon\in(0,1)$, we show that an envy-free allocation is unlikely to exist as long as $m=O(n\log n/\log\log n)$ (Theorem~\ref{thm:non-existence}). This comes to within a $\Theta(\log\log n)$ factor of matching their upper bound. Both our existence and non-existence results rely on several new key ideas. In particular, for the existence result we need a completely different algorithm, since the welfare-maximizing algorithm used to achieve existence for $m=\Omega(n\log n)$ cannot yield any improvement of this bound (Proposition~\ref{prop:coupon}). 

\subsection{Related Work}
\label{sec:relatedwork}

Besides the work of \cite{DickersonGoKa14} that we mentioned, several other works have investigated the asymptotic existence and non-existence of fair allocations for various fairness notions. \cite{Suksompong16} considered \emph{proportional} allocations---allocations in which every agent receives at least $1/n$ of her value for the whole set of items---and showed that such allocations exist with high probability if $m$ is a multiple of $n$ or $m=\omega(n)$. \cite{KurokawaPrWa16} showed that an allocation that satisfies the \emph{maximin share criterion} is likely to exist as long as either $m$ or $n$ goes to infinity.\footnote{We refer to their paper for the definition, but remark here that both proportionality and the maximin share criterion are weaker than envy-freeness when utilities are additive.} As in our work, both \cite{KurokawaPrWa16} and \cite{Suksompong16} used techniques from the theory of matchings in random graphs to establish the existence of fair allocations. \cite{AmanatidisMaNi17} also addressed the existence of allocations satisfying the maximin share criterion. Finally, \cite{ManurangsiSu17} considered the setting where goods are allocated to groups of agents and generalized \cite{DickersonGoKa14}'s results on envy-freeness to that setting.

Since envy-free allocations cannot always be obtained even in the simplest setting with two agents and one item, a recent line of work has focused on relaxations of envy-freeness with the goal of recovering the guaranteed existence. These relaxations include \emph{envy-freeness up to one good}---any envy that an agent has towards another agent can be eliminated by removing \emph{some} item from the latter agent's bundle---and \emph{envy-freeness up to any good}---any such envy can be eliminated by removing \emph{any} item from the latter agent's bundle. It has been shown that these relaxations do provide existence guarantees in a number of settings \citep{LiptonMaMo04,CaragiannisKuMo16,BiswasBa18,PlautRo18}.

\section{Preliminaries}
\label{sec:prelims}

A set $M=[m]$ of indivisible items is to be allocated to a set $N=[n]$ of agents, where we use $[k]$ to denote the set $\{1,2,\dots,k\}$. Each agent $i$ has a nonnegative utility $u_i(j)$ for item $j$. We assume that the utility $u_i(j)$ lies in $[0,1]$; this does not introduce a loss of generality since we can scale down all utilities by their maximum. The utilities of the agents are \emph{additive}, i.e., $u_i(M')=\sum_{j\in M'}u_i(j)$ for any $M'\subseteq M$. The additivity assumption is made in several works on fair division and, in particular, in all of the works on the asymptotic existence of fair allocations mentioned in Section~\ref{sec:relatedwork}.

A \emph{bundle} refers to a subset of $M$. An \emph{allocation} is a partition of $M$ into $n$ bundles $(M_1,M_2,\dots,M_n)$, where bundle $M_i$ is allocated to agent $i$. An allocation is said to be \emph{envy-free for agent $i$} if $u_i(M_i)\geq u_i(M_j)$ for any $j\in N$, and \emph{envy-free} if it is envy-free for every agent $i\in N$.

For agents $i\in N$ and items $j\in M$, the utilities $u_i(j)$ are drawn independently from a distribution $\mathcal{U}$. A distribution is said to be \emph{non-atomic} if it does not put positive probability on any single point. The condition that we will impose on $\mathcal{U}$ for our results is that it ``behaves like a polynomial close to 1'' in the sense that the function $g(\alpha) = \Pr_{u \sim \mathcal{U}}[u \geq 1 - \alpha]$ is bounded above and below by a polynomial. This is formalized in the following definition. 

\begin{definition}
\label{def:polybounded}
Let $\theta, q$ be any positive real numbers. A probability distribution $\mathcal{U}$ on $[0, 1]$ is said to be \emph{$(\theta, q)$-polynomially bounded below} (resp. \emph{above}) \emph{at 1} if for every $\alpha \in (0, 1]$, we have $\Pr_{u \sim \mathcal{U}}[u > 1 - \alpha] \geq \theta \cdot \alpha^q$ (resp. $\Pr_{u \sim \mathcal{U}}[u > 1 - \alpha] \leq \theta \cdot \alpha^q$).

A probability distribution $\mathcal{U}$ is said to be \emph{polynomially bounded at 1} if there exist constants $\underline{\theta}, \overline{\theta}, q > 0$ such that $\mathcal{U}$ is $(\underline{\theta}, q)$-polynomially bounded below at 1 and $(\overline{\theta}, q)$-polynomially bounded above at 1.
\end{definition}

We assume in Section~\ref{sec:existence} that $\mathcal{U}$ is polynomially bounded at 1 and in Section~\ref{sec:non-existence} that $\mathcal{U}$ is polynomially bounded below at 1. To illustrate the generality of this definition, consider any non-atomic continuous distribution $\mathcal{U}$ whose probability density function $f_{\mathcal{U}}$ is bounded below (resp. above) around 1, i.e., there exist $\epsilon, \beta > 0$ such that $f_{\mathcal{U}}(x) \geq \beta$ (resp. $f_{\mathcal{U}}(x) \leq \beta$) for all $x \geq 1 - \epsilon$. One can check that $\mathcal{U}$ is polynomially bounded below (resp. above) at 1 with parameters $\theta = \epsilon \cdot \beta$ (resp. $\theta = \max\{\beta, 1/\epsilon\}$) and $q = 1$. This immediately implies that the uniform distribution on $[0, 1]$ and a normal distribution (with any mean and variance) truncated at 0 and 1 are polynomially bounded at 1, as both have probability density functions that are bounded both above and below in $[0, 1]$.

For completeness, let us also provide examples of distributions that are not polynomially bounded at 1. The first example is when $\Pr_{u \sim \mathcal{U}}[u = 1] > 0$. In this case, clearly the distribution is not polynomially bounded above at 1. Another example is if we take any $\mathcal{U}$ such that $\Pr_{u \sim \mathcal{U}}[u \geq 1 - 1/2^i] = 1/2^{i^2}$ for all integer $i \geq 0$. It is not hard to see that this distribution is not polynomially bounded below at 1. Indeed, for any fixed $q > 0$, we have $\lim_{i \to \infty} \frac{(1/2^{i^2})}{(1/2^i)^q} = 0$, which means that there is no $\theta > 0$ such that $\Pr_{u \sim \mathcal{U}}[u \geq 1 - \alpha] \geq \theta \cdot \alpha^q$ for all $\alpha \in (0, 1]$.

Finally, a statement is said to hold \emph{with high probability} if the probability that it holds approaches 1 as $n\rightarrow\infty$.

\section{Existence}
\label{sec:existence}

In this section, we investigate the existence front of envy-free allocations. We first show that the welfare-maximizing algorithm of \cite{DickersonGoKa14} cannot yield any improvement of the $m=\Omega(n\log n)$ bound. We then prove the main existence result of this paper, which holds for any $m\geq 2n$ that is a multiple of $n$: %Like Dickerson et al., we also exhibit an efficient algorithm that computes an envy-free allocation with high probability.

\begin{theorem}
\label{thm:existence}
Let $r\geq 2$ be an integer, and suppose that $m=rn$. Assume that $\mathcal{U}$ is polynomially bounded at 1. With high probability, there exists an envy-free allocation. Moreover, there is a polynomial-time algorithm that computes such an allocation.
\end{theorem}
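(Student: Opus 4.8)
The plan is to build the envy-free allocation in two stages: first hand out $r$ items to each agent so that every agent is "roughly balanced," and then argue that the leftover slack can be used to patch up any residual envy. Since $m = rn$ with $r \geq 2$, there is room to give each agent exactly $r$ items. I would start by identifying, for each agent $i$, the items that $i$ values very close to $1$---say, at least $1 - \alpha$ for a threshold $\alpha = \alpha(n)$ to be tuned. Because $\mathcal{U}$ is polynomially bounded below at $1$, the expected number of such "near-perfect" items for a fixed agent is about $\underline{\theta} \alpha^q m = \underline{\theta}\alpha^q r n$; choosing $\alpha$ so that this quantity is a suitably large multiple of $r$ (e.g. $\alpha$ a small power of $\log n / n$ times a constant, so that the count concentrates around, say, $C r \log n$) ensures via a Chernoff bound and a union bound over the $n$ agents that \emph{every} agent has at least $r$ near-perfect items, with high probability.

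The combinatorial core is then to choose, for each agent, $r$ of its near-perfect items so that the chosen sets are pairwise disjoint---i.e., to find a perfect "$r$-to-one" assignment in the bipartite graph between agents and items where $i$ is adjacent to $j$ iff $u_i(j) \geq 1 - \alpha$. This is exactly a question about matchings/flows in a random bipartite graph: I would set it up as finding an orientation or a flow saturating demand $r$ at every agent, and invoke a Hall-type condition. The relevant fact is that in the random bipartite graph described, with the expected degree on the agent side being $\omega(\log n)$, Hall's condition holds with high probability for all subsets simultaneously, by the standard first-moment/union-bound argument (the same style of argument used by Kurokawa--Procaccia--Wang and Suksompong, as the excerpt notes). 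Crucially, if agent $i$ receives $r$ items each of value $\geq 1 - \alpha$, then $u_i(M_i) \geq r(1-\alpha) = r - r\alpha$, while any other bundle $M_j$ has exactly $r$ items each of value $\leq 1$, so $u_i(M_j) \leq r$. This is \emph{not} yet envy-free---there is a deficit of up to $r\alpha$---so a naive threshold argument does not close the gap.

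To fix the deficit I would reserve a small pool of items before running the matching: set aside $\delta n$ items (for a small constant or slowly growing $\delta$), run the $(r - \delta)$-assignment on the remaining $(r-\delta)n$ items plus the reserved pool used cleverly---or, cleaner, give each agent $r - 1$ near-perfect items via the matching above, leaving $n$ items unassigned, and then distribute those $n$ leftover items one per agent. The point is that with $r - 1 \geq 1$ near-perfect items locked in, agent $i$'s value is already $\geq (r-1)(1-\alpha)$, and the last item gives a bit more; meanwhile to bound envy I need that no \emph{other} agent's bundle exceeds agent $i$'s. The tight case is $r = 2$, where agent $i$ gets one item worth $\geq 1 - \alpha$ plus one leftover item, and must not envy another agent holding two items of near-maximal value. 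So the reservation pool must itself be assigned with an envy-type guarantee: I would match the $n$ leftover items so that agent $i$ also gets one item it values $\geq 1 - \alpha$ from that pool (possible because each agent still has many near-perfect items left even after removing $(r-1)n = n$ items, again by a union bound), giving $u_i(M_i) \geq r(1 - \alpha)$; then I choose $\alpha$ small enough---$\alpha = o(1/r) = o(1)$---that $r(1-\alpha) \geq r - 1 \geq$ any realistically large competitor bundle? No: a competitor bundle can have value arbitrarily close to $r$. The genuine resolution is to make the matching assign to agent $i$ items that are near-perfect \emph{for $i$} while being simultaneously low-value for the comparison; but since utilities are independent, an item near-perfect for $i$ has value distributed as a generic draw for $j$, hence value $\geq 1 - \alpha$ only with tiny probability---so in fact $u_j(M_i)$ concentrates well below $u_i(M_i)$, and symmetrically $u_i(M_j)$ is a sum of $r$ generic draws, which concentrates around $r \cdot \E[u]$, strictly less than $r(1-\alpha)$ once $\alpha$ is small, \emph{provided} $\E_{u\sim\mathcal U}[u] < 1 - \alpha$, i.e. essentially always.

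So the corrected and final plan: (i) fix $\alpha = \alpha(n) \to 0$ with $\alpha^q n \gg \log n$; (ii) show every agent has $\gg r \log n$ items of value $\geq 1 - \alpha$, w.h.p.; (iii) via a Hall-type argument in the random bipartite "near-perfect" graph, assign each agent $r$ disjoint such items, so $u_i(M_i) \geq r(1-\alpha)$; (iv) for envy, fix agents $i \neq j$ and bound $u_i(M_j)$: the $r$ items in $M_j$ were selected using only agent $j$'s utilities, so $u_i$ on them is a sum of $r$ independent $\mathcal U$-draws conditioned only on being in a set of size $r$, giving $\E[u_i(M_j)] = r\mu$ where $\mu = \E_{u\sim\mathcal U}[u] < 1$; a concentration bound plus a union bound over the $\binom n 2$ ordered pairs shows $u_i(M_j) < r(1-\alpha) \leq u_i(M_i)$ simultaneously for all pairs, once $\alpha < 1 - \mu - o(1)$, which holds for large $n$. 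The algorithm is polynomial: thresholding to build the bipartite graph, then a max-flow computation for the $r$-assignment. The main obstacle is step (iv)---the conditioning: "the items in $M_j$ were selected using only $j$'s utilities" is not literally true because Hall-type arguments and flow computations can create subtle dependence between which items land in $M_j$ and the utilities of \emph{other} agents. Handling this cleanly likely requires either a union bound over all possible outcome bundles (there are $\binom{m}{r}^n$ of them---too many unless $r$ is constant, but here $r$ \emph{is} a constant, so $\binom m r = \mathrm{poly}(m)$ and $\binom m r ^n$ is only $2^{O(n\log m)}$, killed by a Chernoff bound with deviation probability $2^{-\Omega(r\log n)}$ if we push the threshold far enough), or a smarter argument that decouples the matching from the comparison utilities by revealing only the indicator bits $\mathbf 1[u_i(j) \geq 1-\alpha]$ first.
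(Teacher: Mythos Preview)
Your threshold-and-matching skeleton (steps (i)--(iii)) is essentially the paper's starting point, but step (iv) has a real gap that your fixes do not close. The ``concentration around $r\mu$'' claim cannot work when $r$ is a constant, which is precisely the theorem's hard case: a sum of $O(1)$ bounded variables does not concentrate, so no Hoeffding/Chernoff inequality beats even the $n^2$-pair union bound, let alone the $\binom{m}{r}^n$-allocation union bound in your fix (a). Concretely, for a fixed size-$r$ set $S$ one has $\Pr[u_i(S)\geq r(1-\alpha)]=\Theta((\log n/n)^r)$ with your choice of $\alpha$, and fix (a) then multiplies by $\binom{m}{r}\cdot n^2\approx n^{r+2}$, giving $\Theta(n^2(\log n)^r)\to\infty$ for every constant $r$. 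Fix (b) points in the right direction but is not a proof as stated: once you reveal the indicator bits and run the matching, the items landing in $M_j$ may \emph{all} carry the indicator $u_i(\cdot)\geq 1-\alpha$, in which case $u_i(M_j)\geq r(1-\alpha)$ deterministically and envy cannot be ruled out.

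The paper closes the gap with two ideas absent from your proposal. First, for $r\geq 3$, it does not treat $M_j$ as an arbitrary $r$-set: since the algorithm only assigns to $j$ items from $\{j':u_j(j')\geq\tau\}$, any candidate bad set $S$ must also satisfy $u_j(j')\geq\tau$ for all $j'\in S$, contributing an extra factor of roughly $(\log n/n)^r$ to the per-$(i,j,S)$ probability. The union bound then reads $\binom{m}{r}\cdot n^2\cdot O((\log n/n)^{2r})=O(n^{2-r}(\log n)^{2r})$, which is $o(1)$ precisely for $r\geq 3$. Second---and this is the missing idea---the case $r=2$ still fails even with that refinement (the bound is $(\log n)^{O(1)}$, and indeed two agents share two common above-threshold items with non-vanishing probability, so the matching may hand both to one of them). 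The paper handles $r=2$ by changing the algorithm: before matching, it \emph{removes} from each agent's candidate set any item whose presence would allow some other agent to value a size-$r$ subset above $r\tau$; it then proves, via a local-resilience argument for Erd\H{o}s--R\'enyi bipartite graphs, that at most $O(1)$ edges per vertex are removed and a perfect $r$-matching still exists in the pruned graph. Without this removal step, no pure threshold-matching scheme is correct at $r=2$.
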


A bonus of our algorithm is that it returns a \emph{balanced} allocation, i.e., one that gives every agent the same number of items. This may be desirable in situations where capacity constraints are involved, for example if we divide artworks between museums or players between sports teams.

\subsection{The Limit of the Welfare-Maximizing Algorithm}

Recall the main existence result of \cite{DickersonGoKa14}: when $m = \Omega(n\log n)$, the welfare-maximizing algorithm, which allocates each item to the agent who values it most, is likely to produce an envy-free allocation. We observe that this bound is tight up to a constant factor---for $m = n \log n - \omega(n)$ items, the welfare-maximizing allocation is unlikely to be envy-free. An implication of this observation is that the welfare-maximizing algorithm fails to be envy-free in the case where $m = rn$, for any positive integer $r \leq \log n-\omega(1)$. By contrast, the algorithm that we will present finds an envy-free allocation with high probability for any integer $r \geq 2$.

\begin{proposition}
\label{prop:coupon}
Let $m = n \log n - \omega(n)$, and suppose that $\mathcal{U}$ is non-atomic. Then, with high probability, the welfare-maximizing allocation is not envy-free.
\end{proposition}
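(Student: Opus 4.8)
The plan is to reduce the statement to the lower-tail behaviour of the number of empty bins in a balls-into-bins process, and then to dispatch that by a short second-moment argument. First I would observe that the welfare-maximizing allocation is, distributionally, nothing but a uniformly random assignment of $m$ balls to $n$ bins. Indeed, the welfare-maximizing allocation gives item $j$ to the agent $\argmax_{i\in N} u_i(j)$; since $\mathcal{U}$ is non-atomic, for each fixed $j$ the values $u_1(j),\dots,u_n(j)$ are a.s.\ distinct, so this maximizer is well defined, and by exchangeability of i.i.d.\ samples it is uniformly distributed over $N$. These maximizers are moreover mutually independent across the $m$ items, because the $u_i(j)$ for distinct $j$ are independent. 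Hence, writing $M_i$ for the bundle of agent $i$, the vector $(|M_1|,\dots,|M_n|)$ has exactly the multinomial distribution of $m$ balls thrown independently and uniformly into $n$ bins.

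Second, I would record why an empty bundle forces envy. Non-atomicity gives $\Pr_{u\sim\mathcal{U}}[u=0]=0$, so with probability $1$ every $u_i(j)$ is strictly positive. On this probability-$1$ event, if some agent $i$ has $M_i=\emptyset$ then $u_i(M_i)=0$, while---since $m\ge 1$ and $n\ge 2$---at least one other agent $i'$ holds an item $j$, whence $u_i(M_{i'})\ge u_i(j)>0=u_i(M_i)$, so agent $i$ envies agent $i'$ and the allocation is not envy-free. Since the intersection of a probability-$1$ event with a high-probability event is a high-probability event, it therefore suffices to show that, with high probability, at least one of the $n$ bins is empty.

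Third, I would carry out the balls-into-bins estimate. Let $X$ be the number of empty bins. Then $\E[X]=n(1-1/n)^m=n\exp(m\ln(1-1/n))=n\exp(-m/n-O(m/n^2))$, and since the hypothesis $m=n\log n-\omega(n)$ gives $m/n=\log n-\omega(1)$ and $m/n^2=o(1)$, we get $\E[X]=e^{\omega(1)}(1+o(1))\to\infty$. For the variance, $\E[X(X-1)]=n(n-1)(1-2/n)^m$, and since $1-2/n\le(1-1/n)^2$ this is at most $n(n-1)(1-1/n)^{2m}\le n^2(1-1/n)^{2m}=\E[X]^2$; hence $\operatorname{Var}(X)=\E[X(X-1)]+\E[X]-\E[X]^2\le\E[X]$. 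Chebyshev's inequality then yields $\Pr[X=0]\le\operatorname{Var}(X)/\E[X]^2\le 1/\E[X]\to 0$, which completes the argument.

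There is honestly no deep obstacle here: essentially all of the content is in the first-paragraph reduction, after which the claim is just the classical fact that fewer than about $n\log n$ balls leave some bin empty, and the second-moment computation above is routine. The two points that merit a little care are the non-atomicity step used to convert ``empty bundle'' into ``envy'' (this is where the hypothesis on $\mathcal{U}$ enters, and nothing stronger is needed), and keeping straight that $\log$ denotes the natural logarithm, so that $m=n\log n-\omega(n)$ really does sit below the coupon-collector threshold $n\ln n$.
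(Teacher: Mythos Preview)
Your proof is correct and follows essentially the same approach as the paper: both reduce the welfare-maximizing allocation to a uniform balls-into-bins (coupon collector) process via non-atomicity, and observe that an empty bin forces envy because all utilities are almost surely positive. The only difference is that the paper simply cites the Erd{\H{o}}s--R{\'{e}}nyi (1961) result that some coupon is uncollected when $m = n\log n - \omega(n)$, whereas you supply a self-contained second-moment proof of this fact.
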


\begin{proof}
The proposition follows from a classical result on the \emph{coupon collector's problem}. In this problem, there is an urn of $n$ coupons. Each turn, a coupon is drawn uniformly at random from the urn and immediately returned to the urn. \cite{ErdosRe61} proved that with high probability, after $n \log n - \omega(n)$ turns, some coupon has not been drawn. 

The connection between the coupon collector's problem and our setting is fairly simple. First, the non-atomic assumption on the distribution implies that, almost surely, all items yield positive utility to every agent, and every item has only one agent who values it most. As a result, the welfare-maximizing allocation assigns each item to each agent with probability $1/n$. If we view each agent as a coupon in the coupon collector's problem, Erd{\H{o}}s and R{\'{e}}nyi's result implies that with high probability, some agent does not receive any item in this allocation. From the positive utility observation, the allocation cannot be envy-free.
\end{proof}

\subsection{Warm-Up: A Simplified Algorithm for $r \geq 3$}
\label{sec:simplified}

The remainder of Section~\ref{sec:existence} is devoted to proving Theorem~\ref{thm:existence}; we assume throughout that $m=rn$ for some integer $r\geq 2$. As \cite{DickersonGoKa14} already showed that the theorem holds for $r=\Omega(\log n)$, it suffices for us to establish the statement for $r = O(\log n)$. Nevertheless, we will prove the statement for $r \leq e^{n^{0.1}}$, which is much stronger; while this is not necessary, we do so to demonstrate that our algorithm and its analysis are robust and apply even when the number of items is significantly larger than the number of agents. 

Before we proceed to the actual algorithm, let us provide the intuition behind the algorithm by describing a simpler algorithm that works in all cases except when $r = 2$. For the sake of exposition, we shall restrict ourselves to the case where the distribution $\mathcal{U}$ is the uniform distribution on $[0, 1]$ and $r > 2$ is a constant (i.e., does not grow with $n$). We shall also sometimes be informal here; all proofs will be formalized in the rest of Section~\ref{sec:existence}.

The simplified algorithm tries to find an allocation that satisfies the following two properties: (i) each agent receives exactly $r$ items, and (ii) each agent has utility at least $\tau := 1 - 2\log n/n$ for every item that she receives. If at least one such allocation exists, the algorithm outputs any of them. Else, it outputs NULL. Note that determining whether such an allocation exists and finding one if it exists can be done in polynomial time by reducing to matching: we create a bipartite graph $(N \times [r], M, E)$, where $((i, \ell), j) \in E$ if and only if $u_i(j) \geq \tau$. A desired allocation corresponds to a perfect matching in this graph.\footnote{We write $(U,V,E)$ to denote a bipartite graph with the set of vertices $U$ and $V$ in the partition, which we refer to as the set of \emph{left vertices} and \emph{right vertices} respectively, and the set of edges $E$.}

For the sake of convenience, we introduce the notion of \emph{$r$-matching}, which allows us to focus on the graph with vertex set $N$ instead of $N \times [r]$. In an $r$-matching, each left vertex can be matched to as many as $r$ right vertices, whereas each right vertex is still allowed to be matched to at most one left vertex.

\begin{definition}
An \emph{$r$-matching} of a bipartite graph $G$ is a subgraph of $G$ such that every left vertex has degree at most $r$ and every right vertex has degree at most 1. An $r$-matching is said to be \emph{perfect} if every left vertex has degree exactly $r$ and every right vertex has degree exactly 1.
\end{definition}

As with normal matchings, a perfect $r$-matching can be computed in polynomial time by creating $r$ copies of each left vertex and finding a perfect matching. With this definition, our simplified algorithm can be described as follows.

\begin{algorithm}
\caption{Simplified Algorithm for $r\geq 3$}\label{alg:matching-basic}
\begin{algorithmic}[1]
\Procedure{ThresholdMatching$_\tau(N, M, \{u_i\}_{i\in[n]})$}{}
\For{$i = 1,2,\dots, n$}
\State $M_{\geq \tau}(i) \leftarrow \{j \in M \mid u_i(j) \geq \tau\}$
\EndFor 
\State Let $G_{\geq \tau} = (N, M, E_{\geq \tau})$ denote the graph where $(i, j) \in E_{\geq \tau}$ iff $j \in M_{\geq\tau}(i)$.
\If{$G_{\geq \tau}$ contains a perfect $r$-matching}
\State \Return any perfect $r$-matching of $G_{\geq \tau}$
\Else
\State \Return NULL
\EndIf
\EndProcedure
\end{algorithmic}
\end{algorithm}

We now sketch the proof of correctness of Algorithm~\ref{alg:matching-basic}, which consists of two parts. Firstly, we argue that with high probability, the algorithm returns a perfect $r$-matching in $G_{\geq \tau}$ (i.e., does not output NULL). Secondly, we show that the output allocation is envy-free with high probability. 

\subsubsection{Existence of a perfect $r$-matching in $G_{\geq \tau}$.} 

For the first part, we evoke a classical result regarding the existence of a perfect matching in bipartite random graphs. Recall that for any positive integers $a,b$ and any $p\in[0,1]$, a bipartite graph sampled from the \emph{Erd{\H{o}}s-R{\'{e}}nyi random bipartite graph distribution} $\mathcal{G}(a,b,p)$ consists of left and right vertex sets $A$ and $B$ of size $a$ and $b$ respectively, and for any pair of vertices $a\in A$ and $b\in B$, the edge $(a,b)$ occurs with probability $p$ independently of other pairs of vertices.

\begin{proposition}[\cite{ErdosRe64}]
\label{prop:er-matching}
Let $G$ be a bipartite graph sampled from the Erd{\H{o}}s-R{\'{e}}nyi random bipartite graph distribution $\mathcal{G}(n, n, p)$, where $p = (\log n + \omega(1))/n$. Then, with high probability, $G$ contains a perfect matching.
\end{proposition}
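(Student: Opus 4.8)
The plan is to prove the contrapositive via Hall's marriage theorem together with a first-moment (union bound) argument over ``Hall violators''. Since $G$ is balanced ($|A| = |B| = n$), it has a perfect matching if and only if $|N_G(S)| \ge |S|$ for every $S \subseteq A$; so it suffices to show that with probability $1 - o(1)$ there is no $S \subseteq A$ with $|N_G(S)| \le |S| - 1$. If such a set exists, take one, $S$, minimal under inclusion. Standard arguments give it rigid structure: $|N_G(S)| = |S| - 1$ exactly (otherwise delete a vertex from $S$), and the bipartite graph induced on $S \cup N_G(S)$ is connected (otherwise some connected component would contain a smaller violator). Writing $s = |S|$ and $T = N_G(S)$, so that $|T| = s - 1$, two events must then hold for this fixed pair $(S, T)$, and they involve disjoint sets of potential edges, hence are independent: (a) no edge runs from $S$ to $B \setminus T$, which has probability $(1-p)^{s(n-s+1)}$; and (b) $G[S \cup T]$ contains a spanning tree of $K_{s, s-1}$, which has probability at most $s^{s-2}(s-1)^{s-1}\, p^{2s-2}$ (the number of such trees times $p$ raised to their edge count $2s-2$).

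I first dispose of $s = 1$: a size-one violator is an isolated left vertex, and the expected number of isolated left vertices is $n(1-p)^n \le n e^{-np} = e^{-\omega(1)} = o(1)$, so by Markov's inequality this case occurs with probability $o(1)$. For $s \ge 2$ I union-bound over the $\binom{n}{s}\binom{n}{s-1}$ choices of $(S, T)$. To keep $s$ bounded away from $n$ I use a symmetry reduction: if a minimal violator $S \subseteq A$ has $|S| > 0.6n$, then $B \setminus N_G(S)$ has its neighborhood contained in $A \setminus S$ and has size $n - |N_G(S)| = n - |S| + 1 < 0.6n$, hence contains a minimal violator on the \emph{right} side of size $< 0.6n$; since $\mathcal{G}(n, n, p)$ is invariant under swapping the two sides, it suffices to bound $\Pr[\,\exists\text{ minimal violator } S \subseteq A \text{ with } 2 \le |S| \le 0.6n\,]$ and double it. By the estimates above, this probability is at most
\[
\sum_{s=2}^{0.6n} \binom{n}{s}\binom{n}{s-1}\, s^{s-2}(s-1)^{s-1}\, p^{2s-2}\, (1-p)^{s(n-s+1)} .
\]

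The final step is to show this sum is $o(1)$ using $p = (\log n + \omega(1))/n$. Writing $c_n = np$ and bounding $\binom{n}{s}\binom{n}{s-1} \le n^{2s-1}/(s!\,(s-1)!)$ against the tree count, each summand simplifies to roughly $\tfrac{n}{s^2 c_n^2}\bigl(e^2 c_n^2\, e^{-p(n-s+1)}\bigr)^s$; one then checks that the base tends to $0$, treating $2 \le s \le \sqrt{n}$ (where $e^{-p(n-s+1)} = n^{-1+o(1)}$, so the base is $\approx c_n^2/n \to 0$, the series is geometric-type, and its $s = 2$ term, which is $O(c_n^2/n) = o(1)$, dominates) and $\sqrt{n} < s \le 0.6n$ (where the base is at most $c_n^2 n^{-0.4} \le n^{-0.3}$, so the $s$-th power is super-polynomially small) separately, then summing. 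I expect this calculation to be the main obstacle. The binomials are as large as $n^{\Theta(s)}$, so the factor $(1-p)^{s(n-s+1)} \approx e^{-s c_n}$ must be exploited essentially losslessly --- in particular one cannot bound $n - s + 1$ below by a fixed constant times $n$ uniformly in $s$, since for small $s$ that loses a factor of $n$ per unit increase of $s$ and makes the $s = 2$ term alone diverge --- which is why the casework above (or an equivalent refinement) seems necessary. It is also worth noting that the connectedness constraint, i.e.\ the $p^{2s-2}$ factor, is precisely what tames the $s = 2$ term: without it, that summand is already $\approx \binom{n}{2}\, n\, (1-p)^{2(n-1)} \approx n\, e^{-2\omega(1)}$, which need not tend to $0$.
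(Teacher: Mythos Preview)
The paper does not prove this proposition at all: it is quoted as a classical result of Erd{\H{o}}s and R{\'{e}}nyi and used as a black box (and in fact is only invoked in the warm-up Section~\ref{sec:simplified}, while the actual proof of Theorem~\ref{thm:existence} relies on the more refined Lemma~\ref{lem:matching-robust}). So there is nothing in the paper to compare your argument against.

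That said, your proof is a correct self-contained derivation along the classical lines. The structural facts about a minimal Hall violator (exact deficiency one and connectedness of $G[S\cup N_G(S)]$) are standard and correctly stated; the spanning-tree count $s^{s-2}(s-1)^{s-1}$ for $K_{s,s-1}$ and the independence of events (a) and (b) are right; the $s=1$ step is exactly the isolated-vertex threshold; and the case split $2\le s\le\sqrt{n}$ versus $\sqrt{n}<s\le 0.6n$ is a sound way to control the union bound, with the $s=2$ term dominating the first range and super-polynomial decay in the second. One organizational nit: after the symmetry reduction, the right-side minimal violator you pass to could have size $1$ (an isolated right vertex), so strictly speaking you should invoke the $s=1$ bound on both sides, i.e.\ write $\Pr[\text{no PM}]\le 2\Pr[\exists\text{ isolated vertex in }A]+2\Pr[\exists\text{ minimal violator in }A,\ 2\le|S|\le 0.6n]$; this is immediate from the left--right symmetry you already use and does not affect the argument.
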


To show that a perfect $r$-matching is likely to exist in $G_{\geq \tau}$, we arbitrarily partition the item set $M$ into $r$ parts $M^{(1)}, \dots, M^{(r)}$, each of size $n$. We also create a bipartite graph $H^{(a)}$ for $a = 1,2, \dots, r$ where the left vertex set is $N$, the right vertex set is $M^{(a)}$, and each $(i, j)$ is an edge iff $u_i(j) \geq \tau$. Now, since $\tau=1-2\log n/n$, for each $a$ the graph $H^{(a)}$ is distributed according to the Erd{\H{o}}s-R{\'{e}}nyi random bipartite graph distribution $\mathcal{G}(n, n, 2\log n / n)$. As a result, Proposition~\ref{prop:er-matching} implies that $H^{(a)}$ contains a perfect matching with high probability. By taking the union of the perfect matchings in $H^{(1)}, \dots, H^{(r)}$, we conclude that $G_{\geq \tau}$ contains a perfect $r$-matching with high probability. This completes the first part of the proof sketch.

\subsubsection{Envy-freeness of output allocation.} 

Next, we argue that with high probability, any allocation output by Algorithm~\ref{alg:matching-basic} is envy-free. Consider any such allocation. Since every agent receives $r$ items, each of which yields utility at least $\tau$ to her, her total utility is at least $r \cdot \tau = r - 2r\log n / n$. It therefore suffices to show that with high probability, for every $i' \ne i$, the utility of agent $i'$ for agent $i$'s bundle $M_i$ is at most $r - 2r\log n/n$. We will show that with high probability, for every $i' \ne i$, agent $i'$ values at most $r - 1$ items in $M_i$ more than $1 - 2r\log n/n$. This is sufficient because these $r - 1$ items can each contribute utility at most 1 to agent $i'$, whereas the remaining item contributes utility at most $1 - 2r\log n/n$ to her. It follows that the utility of agent $i'$ for $M_i$ does not exceed $(r-1) + (1-2r\log n/n) = r - 2r\log n/n$.

Fix two distinct $i,i'\in[n]$. Let $E_{i,i'}$ denote the ``bad'' event that there exist $r$ items $j_1,\dots,j_r$ for which $u_i(j_k) \geq \tau$ and $u_{i'}(j_k) \geq 1 - 2r\log n / n$ for $k=1,2,\dots,r$. Consider any item $j \in M$. Since we assume that $u_i(j)$ and $u_{i'}(j)$ are drawn independently from the uniform distribution on $[0,1]$, the probability that item $j$ satisfies the two inequalities above for $i$ and $i'$ is at most $\frac{2\log n}{n} \cdot \frac{2r\log n}{n} = \frac{4r\log^2 n}{n^2}$. Using the union bound over all subsets of $r$ items, we have
\begin{align*}
\Pr[E_{i,i'}] &\leq \binom{m}{r} \cdot \left(\frac{4r\log^2 n}{n^2}\right)^r \\
              &\leq \left(\frac{4r^2\log^2 n}{n}\right)^r = o(n^{-2}),
\end{align*}    
where we use the inequality $\binom{m}{r}\leq m^r$ and the assumption that $r\geq 3$ is constant.
Applying the union bound again over all $i, i'$, the probability that at least one bad event occurs is $o(1)$. This concludes our proof sketch for the simplified algorithm.

\subsection{The Algorithm}

Having described the simplified algorithm, we now proceed to the actual algorithm. Before we do so, let us note that Algorithm~\ref{alg:matching-basic} does \emph{not} work for $r = 2$. This is because when $r = 2$, there is a constant probability that some pair of agents have the same two most valued items. In this case, Algorithm~\ref{alg:matching-basic} could output an allocation that assigns both items to one of the two agents, which would mean that this agent is envied by the other agent.

To make Algorithm~\ref{alg:matching-basic} work for $r = 2$, recall that the algorithm could fail if it is possible to find $r$ items in the candidate item set of agent $i$ (i.e., the set of items for which agent $i$ has utility at least $\tau$) that another agent $i'$ values more than $r \cdot \tau$ in total. The modification to the algorithm is simple: remove any such problematic items from the candidate set of $i$ before we try to find a perfect $r$-matching in the graph.

There are multiple ways to implement this removal step. The way we use, which we feel is quite natural, is to continue removing from the candidate set of agent $i$ an item which agent $i'$ values the most, until the $r$ items in the candidate set of $i$ that are most highly valued by $i'$ are not valued more than $r \cdot \tau$ in total. The pseudo-code of the algorithm is presented below as Algorithm~\ref{alg:matching-removal}; here we use sum-top$_r(S)$ to denote the sum of the $r$ largest elements of $S$ for any \emph{multiset} of real numbers $S$ (or the sum of all elements if $S$ contains less than $r$ elements). The set in line~\ref{step:multiset} of the algorithm is considered as a multiset. The appropriate value of $\tau$ depends on the distribution $\mathcal{U}$ and will be specified later.

\begin{algorithm}[h!]
\caption{Algorithm for any $r\geq 2$}\label{alg:matching-removal}
\begin{algorithmic}[1]
\Procedure{ThresholdMatchingWithRemoval$_\tau$\linebreak$(N, M, \{u_i\}_{i \in [n]})$}{}
\For{$i = 1,2, \dots, n$}
\State $M^*_{\geq \tau}(i) \leftarrow \{j \in M \mid u_i(j) \geq \tau\}$
\For{$i' \in [n] \setminus \{i\}$}
\While{sum-top$_r\left(\{u_{i'}(j) \mid j \in M^*_{\geq \tau}(i)\}\right) > r \cdot \tau$} \label{step:multiset}
\State $M^{*}_{\geq \tau}(i) \leftarrow M^{*}_{\geq \tau}(i) \setminus \argmax_{j \in M^{*}_{\geq \tau}(i)} u_{i'}(j)$ \label{step:remove}
\EndWhile
\EndFor
\EndFor
\State Let $G^*_{\geq \tau} = (N, M, E^*_{\geq \tau})$ denote the graph where $(i, j) \in E^*_{\geq \tau}$ iff $j \in M^*_{\geq \tau}(i)$.
\If{$G^*_{\geq \tau}$ contains a perfect $r$-matching}
\State \Return any perfect $r$-matching of $G^*_{\geq \tau}$
\Else
\State \Return NULL
\EndIf
\EndProcedure
\end{algorithmic}
\end{algorithm}

The above modification ensures that if Algorithm~\ref{alg:matching-removal} returns an allocation, it must be envy-free. Indeed, each agent $i$ has utility at least $\tau$ for every item assigned to her in the $r$-matching, so her total utility is at least $r\cdot\tau$. On the other hand, by construction of the graph $G^*_{\geq \tau}$, each agent $i'$ values the $r$ items assigned to agent $i$ at most $r \cdot \tau$. Thus, the output allocation must be envy-free.

In order to establish Theorem~\ref{thm:existence}, it therefore remains to show that with an appropriate choice of $\tau$, a perfect $r$-matching in $G^*_{\geq \tau}$ exists with high probability. Recall our assumption that the distribution $\mathcal{U}$ from which the utilities are drawn is polynomially bounded at 1. Let $\underline{\theta}, \overline{\theta}, q > 0$ be the associated parameters. It suffices to prove the following lemma:

\begin{lemma}
\label{lem:exists}
Set $\tau := 1-\left(\frac{64\log m}{\underline{\theta}n}\right)^{1/q}$ in Algorithm~\ref{alg:matching-removal}, and let $2\leq r\leq e^{n^{0.1}}$. Then, with high probability, the graph $G^{*}_{\geq \tau}$ contains a perfect $r$-matching.
\end{lemma}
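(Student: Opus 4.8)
The plan is to reduce the statement to Hall's condition for perfect $r$-matchings and then argue that the removal step of Algorithm~\ref{alg:matching-removal} cannot destroy it. Making $r$ copies of each left vertex (each copy keeping the same neighborhood) turns a perfect $r$-matching of $G^{*}_{\ge\tau}$ into a perfect matching of a \emph{balanced} bipartite graph with $rn$ left vertices and $m=rn$ right vertices; by Hall's marriage theorem this matching exists iff for every $S\subseteq N$ the neighborhood $N^{*}_{\ge\tau}(S):=\bigcup_{i\in S}M^{*}_{\ge\tau}(i)$ satisfies $|N^{*}_{\ge\tau}(S)|\ge r|S|$. So the whole lemma reduces to showing that this inequality holds simultaneously for all $S\subseteq N$ with high probability.

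First I would analyze the graph $G_{\ge\tau}$ \emph{before} removal. Since the $u_i(j)$ are independent, $G_{\ge\tau}$ is exactly the Erd\H{o}s--R\'enyi bipartite graph $\mathcal{G}(n,m,p)$ with $p=\Pr_{u\sim\mathcal{U}}[u\ge\tau]$, and the choice $\tau=1-(64\log m/(\underline{\theta}n))^{1/q}$ together with polynomial-boundedness gives $64\log m/n\le p\le (64\overline{\theta}/\underline{\theta})\log m/n$. For $S$ with $|S|=k$, each item lies in $N_{\ge\tau}(S)$ independently with probability $1-(1-p)^{k}=\Theta(\min\{kp,1\})$, so $\E|N_{\ge\tau}(S)|=m(1-(1-p)^{k})=\Omega(rk\log m)$ when $k\lesssim n/\log m$ and $=\Omega(m)$ otherwise; in both cases this exceeds the required $rk$ by a factor $\Theta(\log m)$ (or more). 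A Chernoff bound gives $\Pr[|N_{\ge\tau}(S)|<2rk]\le e^{-\Omega(rk\log m)}$ in the first regime, which survives the union bound over the $\binom{n}{k}\le n^{k}$ sets because $r\log m\ge 2\log n$; for $k$ close to $n$ I would instead show that with high probability every item is valued at least $\tau$ by $\Omega(\log m)$ agents, so that already $N_{\ge\tau}(S)=M$ whenever $|S|>n-\Omega(\log m)$. The conclusion of this step is that with high probability $G_{\ge\tau}$ satisfies Hall's condition with a large multiplicative surplus, $|N_{\ge\tau}(S)|\ge\min\{m,\ 2r|S|\log m\}$ for all $S$. This is essentially the warm-up argument / Proposition~\ref{prop:er-matching} made quantitative.

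Next I would bound the damage from the removal. Writing $R_i:=M_{\ge\tau}(i)\setminus M^{*}_{\ge\tau}(i)$, one has $N_{\ge\tau}(S)\setminus N^{*}_{\ge\tau}(S)\subseteq\bigcup_{i\in S}R_i$, hence $|N^{*}_{\ge\tau}(S)|\ge|N_{\ge\tau}(S)|-\sum_{i\in S}|R_i|$, so it suffices to show $\sum_{i\in S}|R_i|$ is dwarfed by the surplus above. The key structural point is that items enter $R_i$ only through a ``collision'': if $j$ is deleted from $M^{*}_{\ge\tau}(i)$ while processing some $i'$, then at that moment $j=\argmax_{j'}u_{i'}(j')$ and $\text{sum-top}_r>r\tau$, which forces the items realizing that top-$r$ sum — at least two of them, since $\tau\ge 1/r$ for large $n$ — to all have $u_i$-value $\ge\tau$ and $u_{i'}$-value $>1-r(1-\tau)$. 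Hence $R_i\subseteq\{\,j\in M_{\ge\tau}(i):\exists\,i'\ne i,\ j'\in M_{\ge\tau}(i)\setminus\{j\}\text{ with }u_{i'}(j)>\tau\text{ and }u_{i'}(j')>1-r(1-\tau)\,\}$. Since $\Pr[u>\tau]$ and $\Pr[u>1-r(1-\tau)]$ are both $O(\mathrm{poly}(r)\log m/n)$ and $|M_{\ge\tau}(i)|=O(r\log m)$ up to polylog factors with high probability, a union bound over the witnessing tuples $(i',j,j')$, and then over $S$, should show that with high probability $\sum_{i\in S}|R_i|\ll 2r|S|\log m-r|S|$; combined with the previous paragraph this gives $|N^{*}_{\ge\tau}(S)|\ge r|S|$ for all $S$. (When $r$ is a constant $\ge 3$, this estimate in fact shows that with high probability \emph{no} removal occurs at all, so $G^{*}_{\ge\tau}=G_{\ge\tau}$ and one is back in the warm-up; the genuinely new work is for $r=2$.)

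I expect the main obstacle to be exactly the third paragraph: getting a bound on $\sum_{i\in S}|R_i|$ that is both uniform over all $2^{n}$ subsets $S$ and tight enough to be absorbed by the $\Theta(\log m)$ surplus. A plain Markov bound on $\sum_{i\in S}|R_i|$ is far too weak to beat $\binom{n}{k}$, so one needs a genuine concentration or higher-moment argument, or a sufficiently economical direct count of the witnessing tuples. The pinch point is $r=2$ together with $|S|$ close to $n$, where the Hall surplus has shrunk to $O(r)=O(1)$ per agent — this is precisely why Algorithm~\ref{alg:matching-basic} fails and the removal is needed — so there one must argue directly that with high probability no item is deleted from \emph{every} candidate set, which requires controlling the (rare, but only polynomially rare) events in which several agents simultaneously compete for the same few highly-valued items.
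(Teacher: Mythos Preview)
Your plan matches the paper's at the top level---Hall's condition plus controlling the removal---but there are two concrete gaps, one of which you already anticipate.

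First, your structural description of a removed edge is too weak to cover the full range $2\le r\le e^{n^{0.1}}$. You witness a removal by two items above the threshold $1-r(1-\tau)$; polynomial boundedness only gives $\Pr_{u\sim\mathcal{U}}[u>1-r(1-\tau)]=O(r^{q}\log m/n)$, and once $r^{q}$ becomes comparable to $n/\log m$ your union bound diverges, so the ``no removal for $r\ge 3$'' claim holds only for bounded $r$. The paper fixes this by taking the $r$-\emph{independent} threshold $\tau':=3\tau-2$, so that $\Pr[u>\tau']\le C\log m/n$ for a constant $C$, and proving (Proposition~\ref{prop:biclique}) that any removed edge $(i,j)$ forces a $K_{2,\lfloor 2r/3\rfloor+1}$ in $G_{>\tau'}$; the resulting first-moment bound scales like $n^{2}\bigl(O(\log^{2}m/n)\bigr)^{\Theta(r)}$ and goes through uniformly for all $3\le r\le e^{n^{0.1}}$ (Proposition~\ref{prop:r3}).

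Second---and this is exactly the obstacle you flag in your last paragraph---the paper does not try to bound $\sum_{i\in S}|R_i|$ over all $2^{n}$ subsets, and it is not clear that target can be hit: for $|S|$ near $n$ the Hall surplus shrinks to $O(1)$, while $\sum_{i\in S}|R_i|$ can be positive even when no item actually leaves $N^{*}_{\ge\tau}(S)$ (the same item may be removed from one agent's list but survive in another's). The paper sidesteps this by proving a \emph{local resilience} statement for $G_{\ge\tau}\sim\mathcal{G}(n,m,p)$: with high probability, $|E_{G_{\ge\tau}}(S,T)|>16\log m\cdot\min\{|S|,|T|\}$ for every $S\subseteq N$ and $T\subseteq M$ with $|T|=m-r|S|+1$ (Lemma~\ref{lem:exp-edge}), which implies that deleting \emph{any} subgraph of maximum degree at most $16\log m$ still leaves a perfect $r$-matching (Lemma~\ref{lem:matching-robust}). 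The ``with high probability'' is over $G_{\ge\tau}$ alone and the conclusion is then deterministic in the deleted subgraph, so the correlation between $G_{\ge\tau}$ and the removal process disappears. All that remains is to show that the removed graph $(N,M,E_{\ge\tau}\setminus E^{*}_{\ge\tau})$ has maximum degree at most~$2$ (Lemma~\ref{lem:removed-bound}): for $r\ge 3$ this is immediate from the previous paragraph, and for $r=2$ it follows from a short case analysis on how two copies of $K_{2,2}$ can overlap in a vertex (Proposition~\ref{prop:r2} and Figure~\ref{fig:biclique-union}). Replacing your neighborhood-surplus estimate by this edge-count version, and your per-subset removal bound by a max-degree bound, is the missing idea.
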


Note that the condition $r\leq e^{n^{0.1}}$ implies that $\tau>0$ for large enough $n$.

The proof of Lemma~\ref{lem:exists} consists of two parts. First, in Section~\ref{sec:removed-bound}, we show that only few edges are removed in line~\ref{step:remove} of Algorithm~\ref{alg:matching-removal}; in particular, we show that with high probability, at most two edges adjacent to any particular vertex are removed. Then, in Section~\ref{sec:resilience}, we show that the existence of a perfect $r$-matching is \emph{locally resilient} (see, e.g.,~\citep{SudakovVu08}) in the following sense: even if we remove a low-degree subgraph from a random graph sampled from the Erd{\H{o}}s-R{\'{e}}nyi random bipartite graph distribution with sufficiently large probability, then the remaining graph still contains a perfect $r$-matching with high probability. Putting these two parts together yields Lemma~\ref{lem:exists}; this is done in Section~\ref{sec:proof-lem-exists}. 

Before we proceed to proving Lemma~\ref{lem:exists}, we perform some preliminary calculations. Picking $\alpha = 1-\tau$ in Definition~\ref{def:polybounded}, we have $\Pr_{u\sim\mathcal{U}}[u > \tau] \geq \frac{64\log m}{n}$. On the other hand, writing $\tau':=3\tau-2$ and letting $\alpha = 1-\tau'$ in Definition~\ref{def:polybounded} yields $\Pr_{u\sim\mathcal{U}}[u > \tau'] \leq \overline{\theta}(3(1-\tau))^q = \frac{C\log m}{n}$ for the constant $C:=3^q\cdot \frac{64\overline{\theta}}{\underline{\theta}}$.

\subsection{Bounding the Number of Edges Removed}
\label{sec:removed-bound}

Let $E_{\geq \tau}$ and $E^*_{\geq \tau}$ denote the set of edges as defined in Algorithm~\ref{alg:matching-basic} and \ref{alg:matching-removal}, respectively. The main result of this subsection is the following lemma:

\begin{lemma} 
\label{lem:removed-bound}
With high probability, the graph $(N, M, E_{\geq \tau} \setminus E^*_{\geq \tau})$ has maximum degree at most 2.
\end{lemma}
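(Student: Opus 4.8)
The plan is to fix an agent-pair order and track, for each ordered pair $(i, i')$, how many edges incident to a vertex get removed during the while-loop that processes $(i, i')$. The key structural observation is that an edge $(i, j)$ can be removed only while processing some pair $(i, i')$, and when it is removed it is because $j$ is among the top-$r$ items of $M^*_{\geq\tau}(i)$ according to $u_{i'}$ and the sum of those top-$r$ values exceeds $r\tau$. Since every item in $M^*_{\geq\tau}(i)$ starts with $u_i$-value at least $\tau$, and removal only shrinks the set, at the moment an edge $(i,j)$ is deleted there must be $r$ items $j_1, \dots, j_r$ still in $M^*_{\geq\tau}(i)$ (hence with $u_i$-value $\geq \tau$) whose $u_{i'}$-values sum to more than $r\tau$. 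The crucial quantitative step is: if the top-$r$ items under $u_{i'}$ sum to more than $r\tau = 3r\tau - 2r \cdot 1 + \dots$ — more carefully, if each of these $r$ values were at most $\tau' = 3\tau - 2$, their sum would be at most $r\tau' < r\tau$ only when $\tau' < \tau$, which holds since $\tau < 1$; so actually we get that at least one of these $r$ items has $u_{i'}$-value $> \tau'$. Iterating, I would show that as long as the while-loop for $(i,i')$ keeps running, the surviving set $M^*_{\geq\tau}(i)$ must contain at least $r$ items on which $u_{i'} \geq \tau$ AND at least one of them has $u_{i'} > \tau'$; each iteration removes exactly the single $u_{i'}$-maximizer, which has value $> \tau'$. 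Hence the number of edges $(i, \cdot)$ removed while processing $(i, i')$ is at most the number of items $j$ with both $u_i(j) \geq \tau$ and $u_{i'}(j) > \tau'$.

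With this in hand, I would bound the max degree of $(N, M, E_{\geq\tau}\setminus E^*_{\geq\tau})$ separately on the two sides of the bipartition. On the right side (items), an item $j$ can be removed from $M^*_{\geq\tau}(i)$ for at most one agent $i$ if — wait, no: $j$ could be removed from several agents' candidate sets. So for the right-degree bound I instead argue: $j$ is removed from $M^*_{\geq\tau}(i)$ only if $u_i(j) \geq \tau$; and if $j$ is removed while processing $(i,i')$ it is the $u_{i'}$-maximizer at that point, so $u_{i'}(j) > \tau'$. The number of $i$ for which the edge $(i,j)$ is removed is at most the number of agents $i$ with $u_i(j) \geq \tau$ that are ``matched'' to some witness $i'$ with $u_{i'}(j) > \tau'$. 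A cleaner route for the right side: $j$'s edge to $i$ is removed only if there is some $i'$ with $u_{i'}(j) > \tau'$ — but this witness $i'$ depends on $j$ and $i$. I think the intended clean bound is on the left side: for a fixed agent $i$, the total number of removed edges incident to $i$ is at most $\sum_{i' \neq i} (\text{number of } j \text{ with } u_i(j) \geq \tau, u_{i'}(j) > \tau')$, and a union bound over the $O(n^2)$ pairs shows this is at most $2$ for all $i$ simultaneously, w.h.p. For the right side, for a fixed item $j$, the removed edges incident to $j$ number at most the count of agents $i$ with $u_i(j) \geq \tau$ for which $j$ was ever a $u_{i'}$-top item — bounded by the number of pairs $(i, i')$ with $u_i(j) \geq \tau$ and $u_{i'}(j) > \tau'$, which again one controls by a union bound since $\Pr[u_i(j) \geq \tau] \leq \Pr[u_i(j) > \tau] \cdot (\dots)$, using $\Pr[u > \tau] = \Theta(\log m / n)$ and $\Pr[u > \tau'] = O(\log m / n)$.

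Concretely, the steps in order: (1) prove the deterministic claim that every edge removed while processing $(i,i')$ corresponds to a distinct item $j$ with $u_i(j)\geq\tau$ and $u_{i'}(j)>\tau'$ (this uses $\tau' = 3\tau-2$ and the sum-top$_r$ threshold, via the averaging argument above); (2) for the left-degree bound, note the number of removed edges at $i$ is at most $|\{(i',j) : i'\neq i,\ u_i(j)\geq\tau,\ u_{i'}(j)>\tau'\}|$, and since each item $j$ with $u_i(j)\geq\tau$ contributes at most one $i'$ per removal step but could appear for many $i'$ — actually it contributes the count of $i'$ with $u_{i'}(j)>\tau'$; bound $\Pr[\text{three removed edges at } i]$ by $\binom{m}{3}$ or $\binom{mn}{3}$ times $(\Pr[u\geq\tau]\Pr[u>\tau'])^{\text{appropriate power}} = (O(\log^2 m/n^2))^{\text{power}}$, chosen so the bound beats $n^{-1}$ after a union bound over $i$; (3) symmetric argument for the right-degree bound over items $j$, now union-bounding over $m$ items and using $\Pr[u\geq\tau]=O(\log m/n)$; (4) combine. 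The main obstacle I anticipate is step (1): getting the exact relationship between the sum-top$_r$ condition, the removal of the single argmax, and the threshold $\tau'=3\tau-2$ right, so that ``an edge is removed'' genuinely certifies ``$u_{i'}(j) > \tau'$'' for the removed item — the factor of $3$ and the $-2$ are engineered precisely so that $r\cdot\tau - (r-1)\cdot 1 = r\tau - r + 1$, and one needs $r\tau - (r-1) > \tau'$ iff $\tau' < r\tau - r + 1 = r(\tau-1)+1$; since $\tau-1<0$ this is strongest at large $r$, giving $\tau' \le 3\tau - 2$ roughly when $r=2$... so the $r=2$ case is the binding one and the choice $\tau'=3\tau-2$ is exactly tuned to it. Nailing this bookkeeping — especially that the bound degrades gracefully, or not at all, for larger $r$ — will be the delicate part; the probabilistic union bounds afterward are routine given the preliminary estimates $\Pr[u>\tau]\ge 64\log m/n$ and $\Pr[u>\tau']\le C\log m/n$ already recorded before the lemma.
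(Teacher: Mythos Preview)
The deterministic claim in your step~(1) --- that each removed edge $(i,j)$ satisfies $u_i(j)\geq\tau$ and $u_{i'}(j)>\tau'$ for the witness $i'$ --- is correct, but it is too weak to drive the probabilistic part. With only this structure, the upper bound you write for the left degree at $i$, namely $|\{(i',j):i'\ne i,\ u_i(j)\geq\tau,\ u_{i'}(j)>\tau'\}|$, has expectation $(n-1)\cdot m\cdot O(\log m/n)\cdot O(\log m/n)=\Theta(r\log^2 m)$, which is $\Theta(\log^2 n)$ already for $r=2$; it cannot be $\leq 2$ with high probability. The union bound you sketch over $\binom{mn}{3}$ triples fares no better: it yields a bound of order $r^3(\log m)^{O(1)}$, not $o(1)$, even before the additional factor of $n$ from the union over $i$. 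The same arithmetic obstructs the right-degree argument. In short, ``one removed edge $\Rightarrow$ one pair $(i',j)$ with two high utilities'' is not a rare enough event.

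What the paper does is extract far more structure from a single removed edge. From the same sum-top$_r>r\tau$ inequality you start with, instead of only concluding that the maximizer exceeds $\tau'$, one upper-bounds the sum by $y\cdot 1+(r-y)\cdot\tau'$, where $y$ counts the terms exceeding $\tau'$, and solves to get $y>2r/3$. This is Proposition~\ref{prop:biclique}: any removed edge $(i,j)$ lies in a complete bipartite subgraph $K_{2,\lfloor 2r/3\rfloor+1}$ of $G_{>\tau'}$. With this stronger structure, for $r\geq 3$ a first-moment count on $K_{2,3}$'s already gives that \emph{no} edge is removed at all with high probability (Proposition~\ref{prop:r3}). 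For $r=2$ the certified subgraph is only a $K_{2,2}$, and a first-moment count on $K_{2,2}$'s still gives $\Theta((\log m)^4)$, which is not $o(1)$; so the paper argues one level deeper: a vertex of degree $\geq 3$ in the removed graph forces two \emph{distinct} copies of $K_{2,2}$ in $G_{>\tau'}$ sharing that vertex, and every possible union of two overlapping $K_{2,2}$'s has at least one more edge than vertices (Figure~\ref{fig:biclique-union}), which is exactly what is needed to make the first-moment bound $o(1)$. Your proposal is missing both the $2r/3$ counting lemma and, for $r=2$, this union-of-$K_{2,2}$'s step.
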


In addition to the graphs $G_{\geq \tau}$ and $G^*_{\geq \tau}$ (as defined in Algorithm~\ref{alg:matching-basic} and~\ref{alg:matching-removal} respectively), we consider the graph $G_{> \tau'} = (N, M, E_{> \tau'})$ which can be defined analogously. That is, the neighbor set of $i \in N$ in $G_{> \tau'}$ is $M_{> \tau'}(i) := \{j \in M \mid u_i(j) > \tau'\}$. 

The next proposition states that, for any edge $(i, j)$ that is removed in line~\ref{step:remove} of Algorithm~\ref{alg:matching-removal}, the edge $(i, j)$ must be part of a complete bipartite subgraph $K_{2, \lfloor 2r/3\rfloor + 1}$ of the graph $G_{> \tau'}$.\footnote{The notation $K_{a,b}$ refers to a complete bipartite graph with left and right vertex sets of size $a$ and $b$, respectively.} We note that this is similar to the argument for the simplified algorithm in Section~\ref{sec:simplified}, which, in the new language, states that any edge $(i, j)$ that would be removed in line~\ref{step:remove} of Algorithm~\ref{alg:matching-removal} must be part of a complete bipartite subgraph $K_{2, r}$ of the graph $G_{> \tau'}$, where the threshold $\tau'$ there is chosen (differently) to be $1 - 2r\log n/n$.

\begin{proposition} 
\label{prop:biclique}
If $(i, j) \in E_{\geq \tau} \setminus E^*_{\geq \tau}$, then there exists $i'$ such that $|M_{> \tau'}(i) \cap M_{> \tau'}(i')| > 2r/3$ and $j \in M_{> \tau'}(i) \cap M_{> \tau'}(i')$. 
\end{proposition}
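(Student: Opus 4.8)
The plan is to track what happens to the edge $(i,j)$ at the moment it is removed in line~\ref{step:remove} of Algorithm~\ref{alg:matching-removal}. Removal of $(i,j)$ happens during the processing of some pair $(i,i')$, and it happens precisely because, at that iteration of the while-loop, $j$ achieves the maximum of $u_{i'}(\cdot)$ over the current candidate set $M^*_{\geq\tau}(i)$, and the while-condition sum-top$_r\bigl(\{u_{i'}(j'') \mid j''\in M^*_{\geq\tau}(i)\}\bigr) > r\cdot\tau$ still holds. So first I would fix the iteration at which $(i,j)$ is deleted and let $S\subseteq M^*_{\geq\tau}(i)$ be the candidate set of $i$ just before this deletion; note $j\in S$ and $j$ maximizes $u_{i'}$ over $S$.

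Next, I would pull out the $r$ items in $S$ with the largest $u_{i'}$-values, say $j=j_1,j_2,\dots,j_r$ (breaking ties arbitrarily), where $u_{i'}(j_1)\geq\dots\geq u_{i'}(j_r)$; these are the items witnessing the while-condition, so $\sum_{k=1}^r u_{i'}(j_k) > r\tau$. The point is to argue that ``many'' of these items are valued above $\tau'=3\tau-2$ by $i'$. Indeed, if fewer than $\lfloor 2r/3\rfloor+1$ of them had $u_{i'}$-value exceeding $\tau'$, then at least $r-\lfloor 2r/3\rfloor \geq \lceil r/3\rceil$ of them would have value at most $\tau'=3\tau-2$; bounding those by $3\tau-2$ and the remaining (fewer than $2r/3+1$, hence at most) $\lceil 2r/3\rceil$ items by $1$, the total would be at most $\lceil 2r/3\rceil\cdot 1 + \lceil r/3\rceil\cdot(3\tau-2)$, and a short arithmetic check shows this is at most $r\tau$, contradicting the while-condition. (I would do this computation carefully with the floors/ceilings; it is the one place where the exact constant $\lfloor 2r/3\rfloor+1$ and the exact definition $\tau'=3\tau-2$ are used, and it is the main place an off-by-one could bite.) Hence at least $\lfloor 2r/3\rfloor+1$ of $j_1,\dots,j_r$ lie in $M_{>\tau'}(i')$. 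Call this set $T$, so $|T|\geq\lfloor 2r/3\rfloor+1$ and $T\subseteq M_{>\tau'}(i')$.

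It remains to see that $T\subseteq M_{>\tau'}(i)$ as well, and that $j\in T$. Every item in $T$ was in the candidate set $S\subseteq M^*_{\geq\tau}(i)\subseteq M_{\geq\tau}(i)$, i.e.\ has $u_i(\cdot)\geq\tau > \tau'$ (for large $n$, since $\tau<1$ forces $\tau'=3\tau-2<\tau$); so $T\subseteq M_{>\tau'}(i)$, giving $T\subseteq M_{>\tau'}(i)\cap M_{>\tau'}(i')$ and therefore $|M_{>\tau'}(i)\cap M_{>\tau'}(i')| \geq |T| > 2r/3$. Finally, $j=j_1$ has the largest $u_{i'}$-value among $j_1,\dots,j_r$, so if any of them lies in $M_{>\tau'}(i')$ then $j$ does; since $T$ is nonempty, $j\in T\subseteq M_{>\tau'}(i)\cap M_{>\tau'}(i')$, which is the remaining assertion. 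The only genuinely delicate step is the counting argument bounding sum-top$_r$ from above under the assumption that few items are large, and making sure the threshold $\lfloor 2r/3\rfloor+1$ comes out exactly; everything else is bookkeeping about which set the items belong to.
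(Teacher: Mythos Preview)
Your proposal is correct and follows essentially the same approach as the paper: identify the agent $i'$ that triggers the removal, use the while-condition $\text{sum-top}_r > r\tau$ together with the definition $\tau' = 3\tau - 2$ to force more than $2r/3$ of the relevant items into $M_{>\tau'}(i')$, and note they already lie in $M_{>\tau'}(i)$ since $S\subseteq M_{\geq\tau}(i)$. The only cosmetic difference is that the paper sets $y=\min\{r,|M_{>\tau'}(i)\cap M_{>\tau'}(i')|\}$ and solves $r\tau < y + (r-y)\tau'$ directly (and shows $u_{i'}(j)\geq r\tau/r>\tau'$ by averaging), whereas you argue by contradiction within the explicit top-$r$ list; your ceiling $\lceil 2r/3\rceil$ should be the floor $\lfloor 2r/3\rfloor$, but you already flagged that step as the one needing care.
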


\begin{proof}
First, let us argue that $j \in M_{> \tau'}(i) \cap M_{> \tau'}(i')$. The assumption that $(i, j) \in E_{\geq \tau}$ immediately implies that $j \in M_{>\tau'}(i)$ since $\tau'=3\tau-2<\tau$. Let $i'\in N$ be the vertex in line~\ref{step:multiset} of Algorithm~\ref{alg:matching-removal} that causes the removal of the edge $(i,j)$. At this line, we have sum-top$_r\left(\{u_{i'}(j') \mid j' \in M^*_{\geq \tau}(i)\}\right) > r \cdot \tau$ and $j\in \argmax_{j' \in M^{*}_{\geq \tau}(i)} u_{i'}(j')$. Hence, $u_{i'}(j)\geq (r\cdot\tau)/r = \tau > \tau'$, and therefore $j\in M_{>\tau'}(i')$. We have thus shown that $j\in M_{>\tau'}(i)\cap M_{>\tau'}(i')$.

Next, let $y = \min\{r, |M_{> \tau'}(i) \cap M_{> \tau'}(i')|\}$. The condition sum-top$_r\left(\{u_{i'}(j') \mid j' \in M^*_{\geq \tau}(i)\}\right) > r \cdot \tau$ at the time when the edge $(i,j)$ is removed implies that 
\begin{align*}
r \cdot \tau &< \text{sum-top}_r\left(\{u_{i'}(j') \mid j' \in M^*_{\geq \tau}(i)\}\right) \\
&\leq \text{sum-top}_r\left(\{u_{i'}(j') \mid j' \in M_{> \tau'}(i)\}\right) \\
&\leq y\cdot 1 + (r - y)\cdot \tau' \\
&= y + (r - y)(3\tau - 2) \\
&= (3 - 3\tau)y + (3\tau - 2)r,
\end{align*}
which implies that $y > 2r/3$, as claimed.
\end{proof}

We now use Proposition~\ref{prop:biclique} to prove Lemma~\ref{lem:removed-bound}. We consider two cases: $r \geq 3$ and $r = 2$.

\subsubsection{The Case $r\geq 3$.}

The proof for the case $r \geq 3$ is similar to that of the simplified algorithm (Section~\ref{sec:simplified}). In particular, we show that with high probability, no edge is removed in Algorithm~\ref{alg:matching-removal}. This also means that Algorithms~\ref{alg:matching-basic} and \ref{alg:matching-removal} are equivalent with high probability.

\begin{proposition} 
\label{prop:r3}
Let $3 \leq r \leq e^{n^{0.1}}$. Then, with high probability, $E_{\geq \tau} = E^*_{\geq \tau}$.
\end{proposition}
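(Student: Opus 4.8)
The plan is to show that, with high probability, \emph{no} edge is ever removed in line~\ref{step:remove} of Algorithm~\ref{alg:matching-removal}, which immediately gives $E_{\geq\tau}=E^*_{\geq\tau}$ (and, as a byproduct, that Algorithms~\ref{alg:matching-basic} and~\ref{alg:matching-removal} coincide). By Proposition~\ref{prop:biclique}, every removed edge $(i,j)$ certifies the existence of an agent $i'$ with $|M_{>\tau'}(i)\cap M_{>\tau'}(i')|>2r/3$. Hence it suffices to prove that, with high probability, $|M_{>\tau'}(i)\cap M_{>\tau'}(i')|\leq 2r/3$ for every pair of distinct agents $i,i'$. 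Setting $k:=\lfloor 2r/3\rfloor+1$, the ``bad'' event for a fixed pair is exactly $|M_{>\tau'}(i)\cap M_{>\tau'}(i')|\geq k$.

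For a fixed pair $i\neq i'$ I would bound the probability of this bad event by a union bound over all $k$-subsets of $M$. For a single item $j$, the independence of $u_i(j)$ and $u_{i'}(j)$ together with the preliminary estimate $\Pr_{u\sim\mathcal{U}}[u>\tau']\leq C\log m/n$ gives $\Pr[j\in M_{>\tau'}(i)\cap M_{>\tau'}(i')]\leq (C\log m/n)^2$; moreover these events are independent across items, since distinct items involve disjoint collections of utility variables. Therefore $\Pr[|M_{>\tau'}(i)\cap M_{>\tau'}(i')|\geq k]\leq \binom{m}{k}(C\log m/n)^{2k}$, and a union bound over the at most $n^2$ ordered pairs of agents yields $\Pr[\text{some bad pair exists}]\leq n^2\binom{m}{k}(C\log m/n)^{2k}$.

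It then remains to verify that this quantity is $o(1)$, and this is where the growth of $r$ must be handled carefully. Bounding $\binom{m}{k}$ crudely by $m^k$ leaves a leftover factor of roughly $r^k$, which for $r$ as large as $e^{n^{0.1}}$ overwhelms the polynomial savings; instead I would use $\binom{m}{k}\leq (em/k)^k$ together with $m/k=rn/k\leq 3n/2$ (valid since $k\geq 2r/3$), so that the $r$-dependence cancels and the bound becomes $n^2\cdot\left(3eC^2\log^2 m/(2n)\right)^k$. Since $r\leq e^{n^{0.1}}$ we have $\log m=\log r+\log n\leq 2n^{0.1}$ for large $n$, so the base is $O(n^{-0.8})$; and since $r\geq 3$ we have $k\geq 3$, so the base raised to the $k$-th power is at most its cube once the base is below $1$. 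This gives an overall bound of $O(n^2\cdot n^{-2.4})=O(n^{-0.4})=o(1)$, completing the proof.

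The main obstacle is precisely this last calculation in the regime where $r$ grows: the $n^2$ loss from union-bounding over pairs of agents means one cannot get away with $k\geq 2$ — one genuinely needs $k\geq 3$ (guaranteed by $r\geq 3$) paired with a base that is $o(n^{-2/3})$ — and it forces the sharper estimate $\binom{m}{k}\leq (em/k)^k$ rather than $\binom{m}{k}\leq m^k$. Everything else is routine.
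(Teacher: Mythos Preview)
Your proposal is correct and follows essentially the same approach as the paper: reduce via Proposition~\ref{prop:biclique} to ruling out any pair $i,i'$ with $|M_{>\tau'}(i)\cap M_{>\tau'}(i')|\geq k:=\lfloor 2r/3\rfloor+1$, then apply a first-moment bound $n^2\binom{m}{k}(p_{>\tau'})^{2k}$ with $\binom{m}{k}\leq(em/k)^k$ and $m/k<3n/2$, using $k\geq 3$ to absorb the $n^2$ factor. The only cosmetic difference is that the paper immediately coarsens $p_{>\tau'}\leq C_0/n^{0.9}$ before manipulating exponents, whereas you keep the $\log m$ explicit and bound it by $2n^{0.1}$; the arithmetic and the essential observations (notably that $\binom{m}{k}\leq m^k$ is too crude and that $k\geq 3$ is genuinely needed) are identical.
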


\begin{proof}
For convenience, let $p_{> \tau'} := \Pr_{u \sim \mathcal{U}}[u > \tau']$. Note that $p_{> \tau'} \leq \frac{C\log m}{n}\leq C_0/n^{0.9}$ for $C_0:=2C$, where the last inequality holds for sufficiently large $n$. We will argue that with high probability, there are no distinct $i, i' \in N$ such that $|M_{> \tau'}(i) \cap M_{> \tau'}(i')| > 2r/3$. Together with Proposition~\ref{prop:biclique}, this implies that no edge is removed and therefore $E_{\geq \tau} = E^*_{\geq \tau}$.

To show this, we use the standard first moment method. Fix distinct $i, i' \in N$ and a subset $S \subseteq M$ of size $x := \lfloor 2r/3 \rfloor + 1$. The probability that $S \subseteq M_{> \tau'}(i) \cap M_{> \tau'}(i')$ is exactly $(p_{>\tau'})^{2x}$. Hence, by taking the union bound over all choices of $i, i'$ and $S$, the probability that $|M_{> \tau'}(i) \cap M_{> \tau'}(i')| > 2r/3$ for some $i,i'$ is at most
\begin{align*}
n^2 \binom{m}{x} (p_{> \tau'})^{2x} &\leq n^2 \left(\frac{em}{x}\right)^x (p_{> \tau'})^{2x} \\
(\text{since } x \geq 3) &\leq (n(p_{>\tau'})^{1.2})^2 \left(\frac{em(p_{>\tau'})^{1.2}}{x}\right)^x \\
(\text{since } x > 2r/3) &< (n(p_{>\tau'})^{1.2})^2 \left(1.5en(p_{>\tau'})^{1.2}\right)^x \\
(\text{since } p_{>\tau'} \leq \frac{C_0}{n^{0.9}}) &\leq (C_0^{1.2}n^{-0.08})^2(1.5eC_0^{1.2}n^{-0.08})^x \\
&= o(1), 
\end{align*}
which concludes the proof.
\end{proof}

\subsubsection{The Case $r=2$.}

As argued earlier, in the case $r=2$, some edges must be removed in order to guarantee that the output allocation is envy-free. The following proposition ensures that with high probability, for any vertex, at most two edges adjacent to it are removed in Algorithm~\ref{alg:matching-removal}.

\begin{proposition} 
\label{prop:r2}
Let $r = 2$. Then, with high probability, the graph $(N, M, E_{\geq \tau} \setminus E^*_{\geq \tau})$ has maximum degree at most 2.
\end{proposition}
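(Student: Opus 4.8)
The plan is to deduce the bound from a ``local sparsity'' property of the random graph $G_{>\tau'}$ and to verify that property by the first-moment method. Write $p_{>\tau'}:=\Pr_{u\sim\mathcal U}[u>\tau']$; as in the proof of Proposition~\ref{prop:r3} we have $p_{>\tau'}\le C\log m/n\le C_0/n^{0.9}$ for all sufficiently large $n$ (here $m=2n$). The sparsity property I would establish is: with high probability, $G_{>\tau'}$ contains no vertex set $S$ with $|S|\le 7$ that spans at least $|S|+1$ edges. This follows from a union bound, since for each fixed $s\le 7$ the expected number of such sets is at most $\binom{3n}{s}\binom{\binom s2}{s+1}p_{>\tau'}^{\,s+1}=O\big(n^{s}\cdot n^{-0.9(s+1)}\big)=O(n^{0.1s-0.9})=O(n^{-0.2})=o(1)$, and there are only finitely many values of $s$.

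It then remains to argue, deterministically, that whenever $G_{>\tau'}$ has this property the graph $(N,M,E_{\ge\tau}\setminus E^*_{\ge\tau})$ has maximum degree at most $2$. Suppose some vertex has three incident removed edges. If this vertex is an agent $i$, the edges are $(i,j_1),(i,j_2),(i,j_3)$ with the $j_\ell$ distinct, and each $(i,j_\ell)$ is deleted in line~\ref{step:remove} while processing some \emph{causer} $i'_\ell\ne i$. If all three causers coincide, say $i'_1=i'_2=i'_3=i'$, then (exactly as in the proof of Proposition~\ref{prop:biclique}) each $j_\ell$ is an $\argmax$ of $u_{i'}$ over a multiset whose top-$2$ sum exceeds $2\tau$, so $u_{i'}(j_\ell)>\tau$; together with $u_i(j_\ell)\ge\tau>\tau'$ this places $j_1,j_2,j_3$ in $M_{>\tau'}(i)\cap M_{>\tau'}(i')$, so $G_{>\tau'}$ contains a $K_{2,3}$ --- a subgraph on $5$ vertices with $6$ edges, contradicting sparsity. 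Otherwise two causers differ, so we may pick $(i,j_1),(i,j_2)$ with $i'_1\ne i'_2$; applying Proposition~\ref{prop:biclique} to each gives items $j_1^*\ne j_1$ and $j_2^*\ne j_2$ for which $G_{>\tau'}$ contains the two $4$-cycles $i\,j_1\,i'_1\,j_1^*$ and $i\,j_2\,i'_2\,j_2^*$. Because these cycles meet only at the vertex $i$, their union has $7-t$ vertices and $8-t$ edges with $t:=|\{j_1,j_1^*\}\cap\{j_2,j_2^*\}|\in\{0,1,2\}$, i.e.\ it is a subgraph on at most $7$ vertices with exactly one more edge than it has vertices --- again contradicting sparsity.

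If instead the offending vertex is an item $j$, its three removed edges are $(i_1,j),(i_2,j),(i_3,j)$ with $i_1,i_2,i_3$ distinct; Proposition~\ref{prop:biclique} gives for each $a$ a causer $i'_a\ne i_a$ and an item $j^*_a\ne j$ so that $G_{>\tau'}$ contains the $4$-cycle $C_a:=i_a\,j\,i'_a\,j^*_a$. Computing the vertex and edge counts of $C_1\cup C_2$ shows that it already has at least one more edge than it has vertices, on at most $7$ vertices --- hence violates sparsity --- \emph{unless} $i'_1=i_2$, $i'_2=i_1$ and $j_1^*=j_2^*$, in which case $C_1\cup C_2$ is a single $4$-cycle. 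In that remaining case one adjoins $C_3$, and a short check over whether $i'_3$ and $j^*_3$ equal previously seen vertices shows that $C_1\cup C_2\cup C_3$ is a subgraph of $G_{>\tau'}$ on at most $7$ vertices with strictly more edges than vertices. Since in every case the sparsity property is contradicted, combining this with the sparsity lemma proves the proposition.

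The random ingredient is routine: it just says that since $p_{>\tau'}=O(\log n/n)$, no constant-size subgraph of $G_{>\tau'}$ can have more edges than vertices. The work lies in the deterministic reduction, and the main obstacle is the item case: there the two $4$-cycles produced by Proposition~\ref{prop:biclique} may collapse into a single cycle with no extra edge, forcing the use of the third removed edge, and one must track vertex coincidences carefully to guarantee that the subgraph extracted never exceeds the size for which sparsity was established.
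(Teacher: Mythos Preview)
Your proof is correct and is essentially the same approach as the paper's: both show that three removed edges at a single vertex would force a small subgraph of $G_{>\tau'}$ with strictly more edges than vertices, and then rule this out by a first-moment calculation using $p_{>\tau'}=O(\log m/n)$. Your ``local sparsity'' lemma (no $S$ with $|S|\le 7$ and $\ge |S|+1$ induced edges) is exactly the probabilistic content the paper uses when it bounds $(n+m)^{|V_H|}p_{>\tau'}^{|E_H|}$ for each of the forbidden graphs in its Figure.

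The only real difference is organizational. The paper avoids your agent/item split and the causer bookkeeping by a single observation: each removed edge incident to $u$ lies in some $K_{2,2}$ of $G_{>\tau'}$ containing $u$ (Proposition~\ref{prop:biclique}), and since every vertex of a $K_{2,2}$ has degree $2$, three distinct removed edges at $u$ cannot all live in the same $K_{2,2}$. Hence $u$ is contained in two \emph{distinct} copies of $K_{2,2}$, and the paper simply enumerates the five isomorphism types of unions of two distinct $K_{2,2}$'s sharing a vertex, noting each satisfies $|E_H|\ge |V_H|+1$. This subsumes your ``all causers equal'' subcase (where you separately produced a $K_{2,3}$) and your item-side ``collapse to a single $4$-cycle'' difficulty in one stroke, so the paper's deterministic step is shorter. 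Your version, on the other hand, makes the sparsity hypothesis explicit and reusable, which is a matter of taste. One small wording issue: you say the two $4$-cycles in the agent case ``meet only at the vertex $i$'' but then (correctly) account for item overlaps via $t$; it would be cleaner to say they meet only at $i$ \emph{on the agent side}.
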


\begin{proof}
Observe that, for $r = 2$, Proposition~\ref{prop:biclique} can be restated as follows: if $(i, j) \in E_{\geq \tau} \setminus E^*_{\geq \tau}$, then there exist $i' \in N$ and $j' \in M$ such that $i, i', j, j'$ form a complete bipartite graph $K_{2, 2}$ in the graph $G_{> \tau'}$.

Now, suppose that some vertex $u \in N \cup M$ appears in at least three edges in $E_{\geq \tau} \setminus E^*_{\geq \tau}$. The previous paragraph implies each such edge must be contained in a copy of $K_{2, 2}$ in the graph $G_{> \tau'}$. Since the three edges from $u$ are distinct, not all three of these copies can be identical. As a result, $u$ must be contained in two different copies of $K_{2,2}$, which means that at least one of the graphs shown in Figure~\ref{fig:biclique-union} must appear as a subgraph of $G_{> \tau'}$. Notice that by the union bound, for any graph $H = (V_H, E_H)$, the probability that it appears as a subgraph of $G_{> \tau'}$ is at most $(n + m)^{|V_H|}(p_{> \tau'})^{|E_H|} \leq (3n)^{|V_H|} \left(\frac{C \log m}{n}\right)^{|E_H|}$. However, all graphs $H$ in Figure~\ref{fig:biclique-union} satisfy $|E_H| \geq |V_H| + 1$. Hence, the probability that each of them appears as a subgraph is at most $\frac{(C\log n)^{O(1)}}{n} = o(1)$. Using the union bound, the probability that at least one of these graphs appears as a subgraph of $G_{> \tau'}$ is also $o(1)$. This implies that the probability that at least one of the vertices is adjacent to more than two edges in $E_{\geq \tau} \setminus E^*_{\geq \tau}$ is $o(1)$, as desired.
\end{proof}

\begin{figure*}
\centering
\begin{subfigure}[m]{0.18\textwidth}
\begin{tikzpicture}[scale=0.75]
    \node[shape=circle,draw=black,minimum size=20pt,fill=light-gray,ultra thick] (A) at (0,0) {u};
    \node[shape=circle,draw=black,minimum size=20pt,ultra thick] (B) at (0,1.5) {};
    \node[shape=circle,draw=black,minimum size=20pt,ultra thick] (C) at (2,2) {};
    \node[shape=circle,draw=black,minimum size=20pt,ultra thick] (D) at (2,0.5) {};
    \node[shape=circle,draw=black,minimum size=20pt,fill=light-gray] (E) at (0,-1.5) {};
    \node[shape=circle,draw=black,minimum size=20pt,fill=light-gray] (F) at (2,-0.5) {};
    \node[shape=circle,draw=black,minimum size=20pt,fill=light-gray] (G) at (2,-2) {};

    \path [-] (A) edge (C);  
    \path [-] (B) edge (C);
    \path [-] (A) edge (D);
    \path [-] (B) edge (D);      
    \path [-] (A) edge (F); 
    \path [-] (A) edge (G);
    \path [-] (E) edge (F);
    \path [-] (E) edge (G);   
\end{tikzpicture}
\end{subfigure}
\begin{subfigure}[m]{0.18\textwidth}
\begin{tikzpicture}[scale=0.75]
    \node[shape=circle,draw=black,minimum size=20pt,fill=light-gray,ultra thick] (A) at (0,0) {u};
    \node[shape=circle,draw=black,minimum size=20pt,ultra thick] (B) at (0,2) {};
    \node[shape=circle,draw=black,minimum size=20pt,ultra thick] (C) at (2,2) {};
    \node[shape=circle,draw=black,minimum size=20pt,fill=light-gray,ultra thick] (D) at (2,0) {};
    \node[shape=circle,draw=black,minimum size=20pt,fill=light-gray] (E) at (0,-2) {};
    \node[shape=circle,draw=black,minimum size=20pt,fill=light-gray] (G) at (2,-2) {};

    \path [-] (A) edge (C);  
    \path [-] (B) edge (C);
    \path [-] (A) edge (D);
    \path [-] (B) edge (D);      
    \path [-] (A) edge (G);
    \path [-] (E) edge (G);   
    \path [-] (E) edge (D);
\end{tikzpicture}
\end{subfigure}
\begin{subfigure}[m]{0.18\textwidth}
\begin{tikzpicture}[scale=0.75]
    \node[shape=circle,draw=black,minimum size=20pt,fill=light-gray,ultra thick] (A) at (0,0) {u};
    \node[shape=circle,draw=black,minimum size=20pt,ultra thick] (B) at (0,2) {};
    \node[shape=circle,draw=black,minimum size=20pt,ultra thick,fill=light-gray] (C) at (2,1) {};
    \node[shape=circle,draw=black,minimum size=20pt,fill=light-gray,ultra thick] (D) at (2,-1) {};
    \node[shape=circle,draw=black,minimum size=20pt,fill=light-gray] (E) at (0,-2) {};

    \path [-] (A) edge (C);  
    \path [-] (B) edge (C);
    \path [-] (A) edge (D);
    \path [-] (B) edge (D);      
    \path [-] (E) edge (C);   
    \path [-] (E) edge (D);
\end{tikzpicture}
\end{subfigure}
\begin{subfigure}[m]{0.18\textwidth}
\begin{tikzpicture}[scale=0.75]
    \node[shape=circle,draw=black,minimum size=20pt,fill=light-gray,ultra thick] (A) at (0,1) {u};
    \node[shape=circle,draw=black,minimum size=20pt,ultra thick] (C) at (2,2) {};
    \node[shape=circle,draw=black,minimum size=20pt,ultra thick] (D) at (2,0.5) {};
    \node[shape=circle,draw=black,minimum size=20pt,fill=light-gray,ultra thick] (E) at (0,-1) {};
    \node[shape=circle,draw=black,minimum size=20pt,fill=light-gray] (F) at (2,-0.5) {};
    \node[shape=circle,draw=black,minimum size=20pt,fill=light-gray] (G) at (2,-2) {};

    \path [-] (A) edge (C);  
    \path [-] (A) edge (D);    
    \path [-] (A) edge (F); 
    \path [-] (A) edge (G);
    \path [-] (E) edge (F);
    \path [-] (E) edge (G);
    \path [-] (E) edge (C);
    \path [-] (E) edge (D);   
\end{tikzpicture}
\end{subfigure}
\begin{subfigure}[m]{0.18\textwidth}
\begin{tikzpicture}[scale=0.75]
    \node[shape=circle,draw=black,minimum size=20pt,fill=light-gray,ultra thick] (A) at (0,1) {u};
    \node[shape=circle,draw=black,minimum size=20pt,ultra thick] (C) at (2,2) {};
    \node[shape=circle,draw=black,minimum size=20pt,ultra thick,fill=light-gray] (D) at (2,0) {};
    \node[shape=circle,draw=black,minimum size=20pt,fill=light-gray,ultra thick] (E) at (0,-1) {};
    \node[shape=circle,draw=black,minimum size=20pt,fill=light-gray] (G) at (2,-2) {};

    \path [-] (A) edge (C);  
    \path [-] (A) edge (D);    
    \path [-] (A) edge (G);
    \path [-] (E) edge (G);
    \path [-] (E) edge (C);
    \path [-] (E) edge (D);   
\end{tikzpicture}
\end{subfigure}
\caption{All possible unions of two distinct complete bipartite graphs $K_{2, 2}$ which share at least one vertex $u$, up to isomorphism. The shaded vertices constitute one copy of $K_{2, 2}$ whereas the thickened vertices constitute another.} 
\label{fig:biclique-union}
\end{figure*}
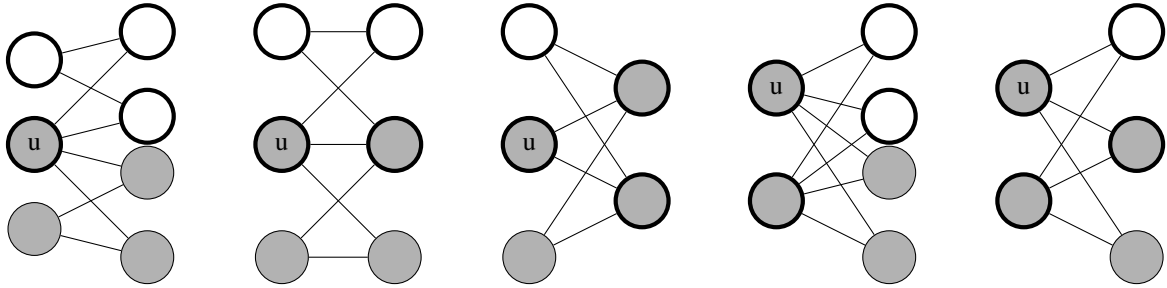

Finally, we note that Propositions~\ref{prop:r3} and~\ref{prop:r2} together imply Lemma~\ref{lem:removed-bound}.

\subsection{Local Resilience of Perfect $r$-Matching}
\label{sec:resilience}

In this subsection, we show that in a random bipartite graph sampled from the Erd{\H{o}}s-R{\'{e}}nyi random bipartite graph distribution $\mathcal{G}(n, rn, p)$  with sufficiently large $p$, not only does a perfect $r$-matching exist, but the existence is also robust in the following sense: even if we remove edges from the graph, as long as not too many edges adjacent to each vertex are removed, a perfect $r$-matching still exists. Such ``robustness'' is known in the literature as \emph{local resilience}. In particular, the local resilience of perfect matchings was shown by \cite{SudakovVu08}. We will extend their proof to the case of perfect $r$-matchings. However, we note that our bound will be slightly weaker than theirs, since our main goal is to derive a bound that is sufficient for the algorithm to work and not to find the best possible parameters.

A typical method for establishing the existence of a perfect matching, which was used both by \cite{SudakovVu08} and by \cite{ErdosRe64}, is to show that the graph satisfies the condition of Hall's Marriage Theorem. For any graph $G$ and any set $S$ of vertices in $G$, denote by $N_G(S)$ the set of vertices adjacent to at least one vertex in~$S$.

\begin{proposition}[Hall's Marriage Theorem]
Let $G = (A, B, E)$ be any bipartite graph such that $|A| = |B|$. If $|N_G(S)| \geq |S|$ for all subsets $S \subseteq A$, then $G$ has a perfect matching.
\end{proposition}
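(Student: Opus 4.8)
The plan is to prove Hall's Marriage Theorem by induction on $|A|$. The base case $|A| = 1$ is immediate: applying the hypothesis to $S = A$ shows that the single left vertex has a neighbor, and matching it there gives a perfect matching since $|B| = 1$ as well. For the inductive step, assume $|A| = k \geq 2$ and that the theorem holds for every bipartite graph with strictly fewer left vertices. I would branch on whether Hall's condition holds with slack on all proper nonempty subsets.

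\textbf{Case 1: $|N_G(S)| \geq |S| + 1$ for every nonempty proper $S \subsetneq A$.} Pick any edge $(a,b)$ — one exists because $|N_G(\{a\})| \geq 1$ — and delete $a$ and $b$ to obtain $G'$. For any $S \subseteq A \setminus \{a\}$ we have $N_{G'}(S) = N_G(S) \setminus \{b\}$, so $|N_{G'}(S)| \geq |N_G(S)| - 1 \geq |S|$; thus $G'$ satisfies Hall's condition and, by the inductive hypothesis, has a perfect matching, which together with $(a,b)$ yields one for $G$.

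\textbf{Case 2: some nonempty proper $S_0 \subsetneq A$ satisfies $|N_G(S_0)| = |S_0|$.} Split $G$ into the induced subgraphs $G_1$ on $S_0 \cup N_G(S_0)$ and $G_2$ on $(A \setminus S_0) \cup (B \setminus N_G(S_0))$; both have equal-size sides, the latter because $|B \setminus N_G(S_0)| = k - |S_0| = |A \setminus S_0|$. For $G_1$, every $S \subseteq S_0$ has $N_{G_1}(S) = N_G(S)$ since $N_G(S) \subseteq N_G(S_0)$, so Hall's condition transfers and, as $|S_0| < k$, the inductive hypothesis gives a perfect matching $M_1$ of $G_1$. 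For $G_2$, take any $T \subseteq A \setminus S_0$; from $N_G(S_0 \cup T) = N_G(S_0) \cup N_G(T)$ and $N_{G_2}(T) = N_G(T) \setminus N_G(S_0)$ we get $|N_{G_2}(T)| = |N_G(S_0 \cup T)| - |S_0| \geq |S_0 \cup T| - |S_0| = |T|$, using Hall's condition for $G$ on $S_0 \cup T$. Since $|A \setminus S_0| < k$, the inductive hypothesis gives a perfect matching $M_2$ of $G_2$, and $M_1 \cup M_2$ is a perfect matching of $G$.

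The only step requiring care is verifying Hall's condition for $G_2$ in Case 2: it rests on the identity $N_G(S_0 \cup T) = N_G(S_0) \cup N_G(T)$ together with the fact that $S_0$ exactly exhausts its neighborhood, so that the ``surplus'' of neighbors available to $T$ is precisely $|N_G(S_0 \cup T)| - |S_0|$; the rest is bookkeeping. An alternative route that sidesteps the case split is the augmenting-path argument: take a maximum matching, and if a left vertex is unmatched grow an alternating tree from it — Hall's condition forces the tree to reach an unmatched right vertex, producing an augmenting path and contradicting maximality — but I find the inductive proof cleaner to write out.
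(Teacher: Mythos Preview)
Your proof is correct and is the standard inductive argument for Hall's Marriage Theorem. However, the paper does not actually prove this proposition: it is stated as a classical result and used as a black box (the paper only applies it, via the $r$-matching corollary in Proposition~\ref{prop:hall-rmatching}, to establish local resilience of perfect $r$-matchings). So there is no ``paper's own proof'' to compare against here; your write-up simply supplies a proof where the paper chose to cite the result.
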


Recall that $G = (A, B, E)$ has a perfect $r$-matching if and only if the graph $(A \times [r], B, E')$, where $((a, \ell), b) \in E'$ iff $(a, b) \in E$, has a perfect matching. Hence, Hall's Marriage Theorem immediately extends to $r$-matchings:

\begin{proposition} \label{prop:hall-rmatching}
Let $G = (A, B, E)$ be any bipartite graph such that $|B| = r|A|$. If $|N_G(S)| \geq r|S|$ for all subsets $S \subseteq A$, then $G$ has a perfect $r$-matching.
\end{proposition}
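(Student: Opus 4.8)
The plan is to reduce directly to Hall's Marriage Theorem applied to the blown-up graph $G' = (A \times [r], B, E')$ already introduced just above the statement, where $((a,\ell),b) \in E'$ iff $(a,b) \in E$. Since $|A \times [r]| = r|A| = |B|$, the two sides of $G'$ have equal size, so Hall's theorem is applicable once we verify its neighborhood condition, and a perfect matching of $G'$ is — by the definition recalled in the text — exactly the same object as a perfect $r$-matching of $G$.

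First I would verify Hall's condition for $G'$. Take any $S' \subseteq A \times [r]$ and let $S \subseteq A$ be its projection onto the first coordinate, i.e., $S = \{a \in A \mid (a,\ell) \in S' \text{ for some } \ell \in [r]\}$. Because each $a \in S$ has exactly $r$ copies in $A \times [r]$, we have $|S'| \leq r|S|$. Moreover, by the definition of $E'$, each copy $(a,\ell)$ has precisely the same neighborhood in $G'$ as $a$ has in $G$, so $N_{G'}(S') = N_G(S)$. Combining these with the hypothesis $|N_G(S)| \geq r|S|$ yields $|N_{G'}(S')| = |N_G(S)| \geq r|S| \geq |S'|$, which is exactly Hall's condition for $G'$.

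Then I would conclude. By Hall's Marriage Theorem, $G'$ has a perfect matching $M'$, which matches every vertex on both sides since $|A \times [r]| = |B|$. Translating back, form the subgraph of $G$ consisting of all edges $(a,b)$ such that $((a,\ell),b) \in M'$ for some $\ell \in [r]$. Every $b \in B$ is matched exactly once in $M'$, so it has degree exactly $1$ in this subgraph; every copy $(a,\ell)$ is matched in $M'$ and distinct copies of $a$ are matched to distinct right vertices, so $a$ is incident to exactly $r$ edges. Hence the subgraph is a perfect $r$-matching of $G$.

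I do not expect a genuine obstacle here: the statement really is an immediate corollary of Hall's theorem via the blow-up construction. The only points requiring a little care are the bookkeeping in the projection step — that passing from $S'$ to its projection $S$ simultaneously shrinks the left set by a factor of at most $r$ while leaving the neighborhood in the blow-up unchanged — and noting that it is a \emph{perfect} matching of $G'$, not merely a matching, that forces left-degree exactly $r$ rather than at most $r$.
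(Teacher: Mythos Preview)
Your proposal is correct and follows exactly the approach the paper takes: it reduces to Hall's Marriage Theorem via the blow-up graph $G' = (A \times [r], B, E')$ and notes that a perfect matching there is precisely a perfect $r$-matching in $G$. The paper in fact states this as a one-line observation without writing out the verification of Hall's condition for $G'$; your projection argument simply fills in that routine detail.
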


One way to show that the condition of Hall's Marriage Theorem is satisfied is to show that there is at least one edge between any sets $S \subseteq A$ and $T \subseteq B$ of appropriate sizes. To ensure that the existence of a perfect $r$-matching is locally resilient, we need to show not only that one edge exists, but also that many edges exist. This can be done via standard concentration bounds. For any graph $G$ and any sets $S,T$ of vertices in $G$, denote by $E_G(S,T)$ the set of edges connecting a vertex in $S$ to a vertex in $T$.

\begin{lemma} 
\label{lem:exp-edge}
Let $G = (A, B, E)$ be a graph sampled from the Erd{\H{o}}s-R{\'{e}}nyi random bipartite graph distribution $\mathcal{G}(n,m,p)$ with $p \geq \frac{64\log m}{n}$. Then, with high probability, the following holds for all subsets $S \subseteq A$ and $T \subseteq B$ such that $|T| = m - r|S| + 1$:
\begin{align} \label{eq:edge-concen}
|E_G(S, T)| > \left(16\log m\right) \cdot \min\{|S|, |T|\}.
\end{align}
\end{lemma}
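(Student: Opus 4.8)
### Proof Proposal for Lemma~\ref{lem:exp-edge}

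The plan is to establish \eqref{eq:edge-concen} via a union bound over all relevant pairs $(S,T)$, using a multiplicative Chernoff bound for each individual pair and then showing the resulting failure probability sums to $o(1)$. First I would observe that for fixed $S \subseteq A$ and $T \subseteq B$, the random variable $|E_G(S,T)|$ is a sum of $|S|\cdot|T|$ independent Bernoulli$(p)$ random variables, hence has expectation $\mu := p\,|S|\,|T|$. Writing $s := |S|$ and using $|T| = m - rs + 1$, the constraint $|T| \geq 1$ forces $1 \leq s \leq \lfloor m/r \rfloor$ (roughly $s \leq n$, since $m = rn$ in our application, though the lemma is stated for general $m$). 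The key quantitative point is that $\mu = p\,s\,(m - rs + 1) \geq \frac{64\log m}{n}\cdot s\,(m-rs+1)$, and I claim this is always at least $64\log m \cdot \min\{s, |T|\}$: when $s \leq |T|$, we use $|T| = m-rs+1 \geq 1$ together with $p\,s \geq p \geq 64\log m / n \geq 64\log m/ m \cdot (m/n)$... actually the cleaner route is $\mu \geq 64 \log m \cdot \frac{s|T|}{n} \geq 64\log m \cdot \min\{s,|T|\}$ provided $\max\{s,|T|\} \geq n$, which holds because $rs + |T| = m+1 > m = rn$ forces $s + |T| > n$, hence $\max\{s, |T|\} \geq n/2$; a factor of $2$ can be absorbed since $64 = 4\cdot 16$. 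So $\mu \geq 32\log m \cdot \min\{s,|T|\}$, comfortably more than twice the target $16\log m \cdot \min\{s,|T|\}$.

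Given this, the second step is a one-sided Chernoff bound: for a sum $X$ of independent Bernoullis with mean $\mu$, $\Pr[X \leq \mu/2] \leq e^{-\mu/8}$. Applying this with $\mu \geq 32\log m\cdot\min\{s,|T|\}$ gives
\begin{align*}
\Pr\!\left[|E_G(S,T)| \leq 16\log m\cdot\min\{s,|T|\}\right] \leq \Pr[X \leq \mu/2] \leq e^{-\mu/8} \leq m^{-4\min\{s,|T|\}}.
\end{align*}
The third step is the union bound. The number of pairs $(S,T)$ with $|S| = s$ (and $|T|$ thereby determined) is $\binom{n}{s}\binom{m}{|T|} \leq n^s \cdot m^{|T|} \leq m^{s + |T|}$. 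Since $s + |T| \leq 2\max\{s,|T|\}$ and... here I need the exponent in the Chernoff bound to dominate $s + |T|$, but $\min\{s,|T|\}$ can be much smaller than $s+|T|$ (e.g.\ $s=1$, $|T| = m-r+1$). This is the main obstacle and it requires a sharper expectation bound in the regime where $\min\{s,|T|\}$ is small.

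The fix for the obstacle: when $s$ is small (say $s \le |T|$, so $|T| = m-rs+1 \ge m/2$), we have $\mu = p\,s\,|T| \ge p\,s\cdot(m/2) \ge \frac{64\log m}{n}\cdot s\cdot\frac{m}{2} = 32\log m\cdot s\cdot\frac{m}{n} \ge 32 s\log m \cdot r \ge 32 s \log m$, which only gives $e^{-\mu/8} \le m^{-4s}$ — still not enough against $m^{s+|T|} \approx m^{m}$. The real point must be that when $|T|$ is large, we should bound the \emph{number} of sets $T$ more cleverly, or equivalently exploit that $\mu$ grows with $|T|$: indeed $\mu = p\,s\,|T| \ge \frac{64\log m}{n}|T| \ge \frac{64\log m}{n}|T|$ and since in this regime $|T| \ge m/2 \ge rn/2$ we get $\mu \ge 32 r \log m \ge 32\log m \cdot |T|/s \cdot (s \cdot r / r)$... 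The clean statement is $\mu = p s |T| \ge \frac{64 \log m}{n}\cdot s|T|$ and we need $\frac{\mu}{8} \ge (s+|T|)\log m \cdot C$ for the union bound, i.e.\ $\frac{8 s |T|}{n} \ge C(s + |T|)$, i.e.\ $\frac{s|T|}{s+|T|} \ge Cn/8$; since $\frac{s|T|}{s+|T|} = \big(\tfrac1s + \tfrac1{|T|}\big)^{-1} \ge \tfrac12\min\{s,|T|\}$ this again reduces to $\min\{s,|T|\} \gtrsim n$, which is false in general. Therefore the honest plan is to split into two cases. Case (a): $\min\{s,|T|\} \ge \log m$ (or some threshold); here $m^{s+|T|} \le m^{2\max\{s,|T|\}}$ competes against $e^{-\mu/8}$ with $\mu \ge \frac{64\log m}{n}s|T| \ge \frac{64\log m}{n}\cdot\max\{s,|T|\}\cdot\log m = 64\log^2 m\cdot\max\{s,|T|\}/n$ — still delicate.

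Given the stated hypothesis $p \ge 64\log m/n$ with $m = \Theta(n)$, the correct and intended argument is almost surely the following: fix $s$, and note $\mu \ge \frac{64\log m}{n}\cdot s\,|T|$. Since $rs + |T| = m+1$ and $r \ge 2$, one has $s\,|T| = s(m+1-rs) \ge \tfrac12 \max\{s,|T|\}\cdot\min\{2s, \ldots\}$; more usefully $s|T| \ge \min\{s,|T|\}\cdot\max\{s,|T|\} \ge \min\{s,|T|\}\cdot\frac{m+1}{1+r} \ge \min\{s,|T|\}\cdot\frac{n}{2}$ (using $\max\{s,|T|\}\ge\frac{rs+|T|}{1+r}=\frac{m+1}{1+r}\ge\frac{rn}{1+r}\ge\frac n2$). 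Hence $\mu \ge 32\log m\cdot\min\{s,|T|\}$ as before, AND simultaneously $\mu \ge \frac{64\log m}{n}\cdot s|T| \ge \frac{64\log m}{n}\cdot\max\{s,|T|\}$ (crudely, since $\min \ge 1$). Combining, $\mu/8 \ge 4\log m\cdot\min\{s,|T|\} + \frac{4\log m}{n}\max\{s,|T|\}$, so $e^{-\mu/8} \le m^{-4\min\{s,|T|\}}\cdot m^{-4\max\{s,|T|\}/n}$. Now against the count $m^{s+|T|} = m^{\min + \max}$: we need $4\min + \frac{4}{n}\max > \min + \max$, i.e.\ $3\min > (1-\tfrac4n)\max$ — still fails for $\min \ll \max$.

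I conclude that the proof must instead bound the number of \emph{bad} configurations differently: rather than union-bounding over all $T$ of size $m - rs + 1$, note that \eqref{eq:edge-concen} failing for $(S,T)$ means $S$ has few edges to a \emph{large} set $T$, equivalently $S$ has \emph{many} of its potential edges going to the small complement $B \setminus T$ of size $rs - 1$. So I would reformulate: failure for some $(S,T)$ with $|S|=s$ implies there exists $S$ with $|S| = s$ such that $|E_G(S, B\setminus T)| \ge ps|T| /2 - \ldots$, or directly: there is a set $S$ of size $s$ with at least $s\,r\cdot(\text{almost all})$ edges into a set of size only $rs-1$. Then union-bound over $S$ (there are $\le n^s$) and over the small set $R := B\setminus T$ of size $rs-1$ (there are $\le m^{rs-1} \le m^{rs}$ choices), and bound $\Pr[|E_G(S,R)| \ge (1-\epsilon)\cdot rs\cdot s \text{-ish}]$ — wait, $\mathbb{E}|E_G(S,R)| = p\,s\,(rs-1) \le p s^2 r$ which is small. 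The event $|E_G(S,T)| \le 16\log m\cdot\min\{s,|T|\}$ with $|E_G(S,B)| = \mathrm{Bin}(s\cdot m, p)$ concentrated near $psm \ge 64 s\log m\cdot(m/n) \ge 64 rs\log m$ means $|E_G(S, B\setminus T)| \ge psm/2 - 16\log m\cdot|T| \ge$ (large), but $|B\setminus T| = rs-1$ is tiny so this forces a dense small set. This \textbf{is} the main obstacle: getting the bookkeeping right so that the entropy $m^{rs}$ of choosing the small complement is beaten by the probability that $\Theta(rs)$ edges land in a set of size $rs-1$. Since $\Pr[\mathrm{Bin}(s(rs-1), p) \ge crs\log m]$ with mean $\ll crs\log m$ is roughly $(e p s \cdot \frac{s(rs-1)}{crs\log m})^{crs\log m} \le (\tfrac{eps^2}{c\log m})^{crs\log m}$, and $eps^2/(c\log m) \le \frac{e\cdot 64 s^2\log m}{cn\log m} = \frac{64es^2}{cn} \le \frac{64es}{c}$ which is not $<1$ for large $s$ — so I would further restrict to small $s$ in this case and handle large $s$ (where $\min\{s,|T|\} = |T|$ is itself large, $\ge n - O(1)/r$... no) by the symmetric argument swapping roles of $A$ and $B$. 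In summary: \textbf{the plan} is (1) reduce to per-pair Chernoff bounds, (2) split by which of $s, |T|$ is smaller, (3) in each case union-bound only over the \emph{smaller} side's complement and apply a concentration bound for edges landing in a small vertex set, (4) verify the arithmetic closes using $p \ge 64\log m/n$ and $s + |T| = m+1$; the main obstacle is step (3)'s arithmetic, and I expect to need to break out very small $s$ (say $s \le \log m$) as a further subcase handled by the same union bound with even more room to spare.
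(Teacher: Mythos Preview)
Your overall structure---lower-bound the mean $\mu = p|S||T|$, apply a one-sided Chernoff bound, then union-bound over $(S,T)$---is exactly the paper's approach, and your computation $\mu \geq 32\log m\cdot\min\{s,|T|\}$ via $\max\{s,|T|\}\geq \frac{rs+|T|}{r+1}=\frac{m+1}{r+1}\geq n/2$ matches the paper line for line. The gap is in the union-bound step, and it is a bookkeeping slip rather than a missing idea.

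You bound the number of pairs by $\binom{n}{s}\binom{m}{|T|}\leq m^{s+|T|}$ and then observe (correctly) that $e^{-\mu/8}\leq m^{-4\min\{s,|T|\}}$ cannot beat $m^{s+|T|}$ when $\min\{s,|T|\}$ is small. But $\binom{m}{|T|}=\binom{m}{rs-1}\leq m^{\min\{rs-1,\,m-rs+1\}}$ and $\binom{n}{s}=\binom{n}{n-s}\leq n^{\min\{s,\,n-s\}}$. With these symmetric bounds the entropy term is at most $2\log m\cdot\min\{rs,\,m-rs+1\}$. Now write
\[
\frac{\mu}{8}=\frac{p}{8r}\cdot(rs)(m-rs+1)=\frac{p}{8r}\cdot\max\{rs,m-rs+1\}\cdot\min\{rs,m-rs+1\},
\]
and use $\max\{rs,m-rs+1\}\geq m/2$ together with $p\geq 64\log m/n$ and $m=rn$ to get $\frac{p}{8r}\cdot\frac{m}{2}\geq 4\log m$. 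Hence the exponent is at most $(-4\log m+2\log m)\cdot\min\{rs,m-rs+1\}\leq -2\log m$, and summing over $s$ gives total failure probability at most $n\cdot m^{-2}\leq 1/n$. This is precisely the paper's calculation.

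Your detour into the complement $B\setminus T$, upper-tail bounds, and case-splitting by the size of $\min\{s,|T|\}$ is unnecessary: the ``choose the complement'' intuition \emph{is} the identity $\binom{m}{|T|}=\binom{m}{rs-1}$, and once you use it the arithmetic closes in one line with no case analysis.
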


The proof of Lemma~\ref{lem:exp-edge} can be found in Appendix~\ref{app:exp-edge}.

With Lemma~\ref{lem:exp-edge} ready, we now establish the local resilience of the existence of perfect $r$-matchings in random graphs.

\begin{lemma} 
\label{lem:matching-robust}
Let $G = (A, B, E)$ be a graph sampled from the Erd{\H{o}}s-R{\'{e}}nyi random bipartite graph distribution $\mathcal{G}(n,m,p)$ with $p \geq \frac{64\log m}{n}$. Then, with high probability, for any subgraph $H = (A, B, E')$ of $G$ with maximum degree at most $16\log m$, the graph $G - H = (A, B, E \setminus E')$ contains a perfect $r$-matching.
\end{lemma}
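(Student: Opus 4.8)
The plan is to verify the Hall-type condition of Proposition~\ref{prop:hall-rmatching} for the graph $G - H$, using Lemma~\ref{lem:exp-edge} to guarantee that enough edges survive between any pair of sets that could witness a violation. Note that the only randomness is in $G$; the subgraph $H$ is allowed to depend on $G$, so the argument first fixes a single high-probability event concerning $G$ alone and then proceeds deterministically for every admissible $H$. Here $m = rn$, so that a perfect $r$-matching can exist and Proposition~\ref{prop:hall-rmatching} applies.

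First I would condition on the event of Lemma~\ref{lem:exp-edge}, namely that $|E_G(S, T)| > (16\log m)\cdot\min\{|S|, |T|\}$ for all $S \subseteq A$ and $T \subseteq B$ with $|T| = m - r|S| + 1$; this holds with high probability since $p \geq 64\log m / n$. Now fix any subgraph $H = (A, B, E')$ of $G$ with maximum degree at most $16\log m$, and suppose for contradiction that $G - H$ fails the condition of Proposition~\ref{prop:hall-rmatching}. Then there is a set $S \subseteq A$ with $|N_{G-H}(S)| \leq r|S| - 1$; in particular $|S| \geq 1$, and since $|S| \leq n$ and $m = rn$ we have $m - r|S| + 1 \geq 1$. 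As $|B \setminus N_{G-H}(S)| \geq m - (r|S| - 1) = m - r|S| + 1$, we may choose $T \subseteq B \setminus N_{G-H}(S)$ with $|T| = m - r|S| + 1$. By construction no edge of $G - H$ joins $S$ to $T$, so $E_G(S, T) \subseteq E_H(S, T)$, where $E_H(S,T)$ denotes the edges of $H$ between $S$ and $T$.

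Next I would bound $|E_H(S, T)|$. Every vertex of $S$ is incident to at most $16\log m$ edges of $H$, so $|E_H(S, T)| \leq (16\log m)|S|$; symmetrically, every vertex of $T$ is incident to at most $16\log m$ edges of $H$, so $|E_H(S, T)| \leq (16\log m)|T|$; hence $|E_H(S, T)| \leq (16\log m)\cdot\min\{|S|, |T|\}$. But the conditioning event gives $|E_G(S, T)| > (16\log m)\cdot\min\{|S|, |T|\}$, which together with $E_G(S, T) \subseteq E_H(S, T)$ is a contradiction. Therefore $G - H$ satisfies $|N_{G-H}(S)| \geq r|S|$ for all $S \subseteq A$, and Proposition~\ref{prop:hall-rmatching} yields a perfect $r$-matching in $G - H$.

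I do not expect a serious obstacle; the argument is essentially a clean reduction to Lemma~\ref{lem:exp-edge}, in the spirit of \cite{SudakovVu08}. The points that merit a little care are (i) checking that every relevant set $S$ is covered by the sizes $|T| = m - r|S| + 1$ appearing in Lemma~\ref{lem:exp-edge}, including the extreme case $|S| = n$ where $T$ consists of a single vertex (and the trivial case $|S| = 0$ where Hall's condition holds automatically); and (ii) the order of quantifiers, namely that the high-probability event concerns $G$ only, so that the same event simultaneously handles all adversarially chosen low-degree subgraphs $H$.
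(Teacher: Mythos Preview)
Your proposal is correct and follows essentially the same approach as the paper: condition on the event of Lemma~\ref{lem:exp-edge}, take a Hall-violating set $S$, pick $T\subseteq B\setminus N_{G-H}(S)$ of size $m-r|S|+1$, and derive a contradiction between the edge lower bound and the degree cap on $H$. The paper phrases the final step as ``some vertex in $S\cup T$ has degree more than $16\log m$ in $H$'' rather than your explicit bound $|E_H(S,T)|\le (16\log m)\cdot\min\{|S|,|T|\}$, but these are the same pigeonhole argument.
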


\begin{proof}
From Lemma~\ref{lem:exp-edge}, with high probability, \eqref{eq:edge-concen} holds for all $S \subseteq A, T \subseteq B$ with $|T| = m - r|S| + 1$. We claim that this implies that $G - H = (A, B, E \setminus E')$ contains a perfect $r$-matching. Suppose for the sake of contradiction that $G - H$ does not contain an $r$-perfect matching. Proposition~\ref{prop:hall-rmatching} implies that there exists a set $S\subseteq A$ such that $|N_{G-H}(S)| \leq r|S| - 1$. Let $T$ be any subset of $B \setminus N_{G-H}(S)$ of size $m - r|S| + 1$. Since $T \cap N_{G-H}(S) = \emptyset$, we have $E_{G-H}(S, T) = \emptyset$. However, by~\eqref{eq:edge-concen}, $|E_G(S, T)| > (16 \log m) \cdot \min\{|S|, |T|\}$. This means that at least one vertex in $S \cup T$ has degree more than $16\log m$ in $H$, which is a contradiction.
\end{proof}

\subsection{Putting Things Together} 
\label{sec:proof-lem-exists}

With Lemmas~\ref{lem:removed-bound} and \ref{lem:matching-robust} in hand, we can (finally) prove Lemma~\ref{lem:exists}.

\begin{proof}[Proof of Lemma~\ref{lem:exists}]
First, Lemma~\ref{lem:removed-bound} ensures that with high probability, for each vertex, at most two edges adjacent to it are removed from $G_{\geq\tau}$ in Algorithm~\ref{alg:matching-removal}, where $G_{\geq\tau}$ is defined as in Algorithm~\ref{alg:matching-basic}. Recall also that $G_{\geq \tau}$ is distributed according to the Erd{\H{o}}s-R{\'{e}}nyi random bipartite graph distribution $\mathcal{G}(n,m,p)$ with $p=\Pr_{u\sim\mathcal{U}}[u\geq\tau]\geq\frac{64\log m}{n}$. It therefore follows from Lemma~\ref{lem:matching-robust} that a perfect $r$-matching exists in $G^*_{\geq \tau}$ with high probability.
\end{proof}

\section{Non-Existence}
\label{sec:non-existence}

Our main non-existence result states that envy-free allocations are unlikely to exist if $m=O(n\log n/\log\log n)$ is not ``close to'' being a multiple of $n$. This improves upon the $m=n+o(n)$ lower bound of \cite{DickersonGoKa14} and comes to within a $\Theta(\log\log n)$ factor of matching their upper bound.

\begin{theorem}
\label{thm:non-existence}
For any real numbers $\theta > 0$, $\epsilon\in(0,1)$, and $q \geq 1$, there exists $c > 0$ depending only on $\theta, \epsilon, q$ such that the following holds: For any positive integer $r \leq \frac{c \log n}{\log \log n}$, if $m \in [rn + n^\epsilon, (r + 1)n - n^\epsilon]$ and $\mathcal{U}$ is $(\theta, q)$-polynomially bounded below at 1, then, with high probability, there is no envy-free allocation.
\end{theorem}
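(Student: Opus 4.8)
The plan is to show that with high probability, some agent $i$ is necessarily envious in \emph{every} allocation, by a union bound over allocations combined with a careful first-moment estimate. The key intuition is that when $m = rn + s$ with $n^\epsilon \le s \le n - n^\epsilon$, any allocation gives some agent at most $r$ items (since $\lfloor m/n \rfloor = r$), and that agent must not envy an agent who receives $r+1$ items. For the agent with $\le r$ items to be non-envious, her $\le r$ items must collectively be worth at least as much as the \emph{best} $r+1$ of the items held by each of the other agents. The crux is that this is a very stringent requirement: roughly, the agent's small bundle must be ``nearly saturated'' (each item worth close to $1$), and simultaneously this must hold robustly against the competing bundles. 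First I would set up the right event to bound: fix a target agent $i$ and ask for the probability that there is \emph{any} allocation that is envy-free for agent $i$; if this probability is $o(1/n)$ (or even just $o(1)$ after we observe that ruling out envy for \emph{one} designated agent per allocation suffices), we are done.

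The main technical step is a union bound over ``relevant'' partial data. Rather than union-bounding over all $n^m$ allocations directly (which is far too lossy), I would reformulate: an allocation that is envy-free for agent $i$ certifies the existence of a set $A \subseteq M$ with $|A| \le r$ that is agent $i$'s bundle, together with the fact that $M \setminus A$ can be partitioned among the other $n-1$ agents into bundles each of which agent $i$ values at most $u_i(A)$. The number of choices of $A$ is at most $\binom{m}{r} \le m^r$. Given $A$, write $W := u_i(A)$; since $|A| \le r$ we have $W \le |A| \le r$, but more usefully, for agent $i$ \emph{not} to envy, every other agent's bundle of size roughly $r$ (and at least one of size $r+1$, by pigeonhole on $m = rn+s$ items among $n$ agents and the fact that $s \ge n^\epsilon > 0$) must be worth $\le W$ to agent $i$. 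The heart of the estimate is: conditioned on $u_i(A) = W$ being ``small'' (say $W \le r - \delta$ for a suitable $\delta$), it is overwhelmingly likely that among the remaining $m - |A| \ge (r-1)n + n^\epsilon$ items there exist $r+1$ items all of which agent $i$ values $> 1 - \delta/r$ (so their sum exceeds $W$), placed in a single other agent's bundle — but we must handle the worst-case adversarial partition. I would instead argue the contrapositive on the \emph{structure} of agent $i$'s valuations of the whole item set: for agent $i$ to be non-envious, the multiset $\{u_i(j)\}_{j \in M \setminus M_i}$ must be such that its ``top $r+1$'' (in the bundle that is forced to have $\ge r+1$ items, and more generally in every bundle) stays below $u_i(M_i) \le r$. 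A clean way: since at least $\lceil s \rceil \ge n^\epsilon$ agents receive $\ge r+1$ items (again by a counting argument on $m = rn + s$), for the designated low agent $i$ to avoid envy we need, for each such agent $i'$, that $u_i(M_{i'}) \le u_i(M_i)$. I would bound the probability, over the draw of $\{u_i(j)\}_j$ alone, that \emph{there exists} any way to choose a set $A$ of $\le r$ items and a partition of the rest into blocks (respecting the size profile) with all blocks of value $\le u_i(A)$; the dominant term comes from requiring $u_i(A) \ge 1 - (\theta,q)$-controlled amounts on each item of $A$, which by the polynomial-lower-bound assumption has probability at most $(\text{const} \cdot (1-\tau)^q)^{|A|}$-type quantities, and these must beat the combinatorial factor $m^r \le (n \log n)^r$; choosing $r \le c\log n/\log\log n$ makes $(n\log n)^r = n^{O(r)} = \exp(O(r \log n)) = \exp(O(c (\log n)^2/\log\log n))$, which must be dominated — this is exactly where the $\log\log n$ enters and why $r$ cannot be pushed to $\log n$.

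The main obstacle is the adversarial partition: I cannot simply say ``the top $r+1$ items land together in one bundle'' because the allocation is chosen to \emph{avoid} that. The correct handle is that there are many items ($(r-1)n + s$ of them) that are \emph{not} in agent $i$'s bundle, and they must be split into $n - 1$ bundles of total size $(r-1)n + s$, so by averaging at least one bundle has value to agent $i$ at least $\frac{1}{n-1}\sum_{j \notin M_i} u_i(j) \ge \frac{(r-1)n + s}{n-1} \cdot \E[u]$-ish — but for a \emph{fixed} typical realization the sum $\sum_{j \notin M_i} u_i(j)$ concentrates around $(rn + s - |M_i|) \mu$ where $\mu = \E_{u\sim\mathcal U}[u]$, giving an average bundle value $\approx r\mu + (s/n)\mu > u_i(M_i)$ unless $u_i(M_i)$ is atypically large, i.e. unless each of agent $i$'s items is atypically close to $1$. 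So the real event to bound is ``$u_i(M_i) \ge r\mu + \Omega(n^{\epsilon-1})$ for the designated bundle'', which forces each of the $\le r$ items in $M_i$ to exceed some threshold $\tau$ with $1 - \tau = \Theta(n^{\epsilon - 1}/r)$ or so, and this has probability at most $\overline{\theta}$-free, using only the \emph{lower} bound we do \emph{not} need — wait, we need an \emph{upper} bound on this probability, which the polynomially-bounded-\emph{below} assumption does not give; so instead I would use that having \emph{all} of $\le r$ specified items above threshold $\tau$ is an intersection of independent events each of probability $\le 1$, and the gain comes from a sharper route: condition on agent $i$'s values, sort them, and observe that avoiding envy requires the sorted prefix sums to dominate in a way that fails with probability $1 - o(1/(n m^r))$ by a Chernoff/Azuma bound on $\sum_{j} u_i(j)$ restricted to any $r$-subset versus any $(r{+}1)$-subset. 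I would carry out: (1) reduce to a single designated envious agent via the pigeonhole that some agent gets $\le r$ items; (2) condition on that agent's valuation vector; (3) show the ``good'' valuation vectors (those admitting a non-envious completion) form an event of probability $o(1)$ via concentration of $\sum_j u_i(j)$ and the combinatorial count $m^r$; (4) verify the range $m \in [rn + n^\epsilon, (r+1)n - n^\epsilon]$ and $r \le c\log n/\log\log n$ make the bound go through, extracting the constant $c$ from the polynomial parameters $\theta, q$ and from $\epsilon$. The delicate point — and where I expect to spend the most effort — is making the concentration argument uniform over the adversary's choice of which items go to agent $i$ and how the rest are partitioned, which I would handle by a union bound over the $\binom{m}{\le r}$ choices of $M_i$ only (the partition of the rest being handled by the averaging argument, which is adversary-independent).
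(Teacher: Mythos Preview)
Your proposal has a genuine gap: the averaging/concentration route you outline cannot deliver the required probability savings, and you have discarded the approach that actually works.

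First, your claim that the union bound over all $n^m$ allocations is ``far too lossy'' is wrong; this is exactly what the paper does. For each \emph{fixed} allocation $\phi$ with bundle sizes $z_1,\dots,z_n$, the paper shows $\Pr[\phi\text{ is envy-free}]\le n^{-2m}$, and then union-bounds. The reason this succeeds is the product structure you never exploit: the events ``$\phi$ is envy-free for agent $i$'' are independent across $i$ (they depend on disjoint rows of the utility matrix), and for a fixed $i$ the events ``$i$ does not envy $i'$'' are independent across $i'$ (they depend on disjoint columns). So one gets a product over all pairs $(i,i')$ with $z_i\le r<z_{i'}$ of the single-pair non-envy probability. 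A counting argument shows there are at least $\frac{\ell(n-\ell)}{r(r+1)}\ge \frac{n^{1+\epsilon}}{2r(r+1)}$ such pairs (where $\ell=m-rn$), and each factor is at most $1-\gamma$ with $\gamma\ge(\theta/(r+1)^q)^{r+1}$, obtained by noting that if all $r+1$ items in $i'$'s bundle exceed $r/(r+1)$ in $i$'s valuation then $i$ envies $i'$. This is precisely where the polynomially-bounded-\emph{below} assumption enters, in the correct direction: it \emph{lower}-bounds the envy probability for one pair, hence \emph{upper}-bounds the non-envy probability. You noticed mid-proposal that you seemed to need an upper tail bound you do not have; the fix is not a different concentration argument but this reframing.

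Second, your averaging argument is quantitatively too weak. It yields only that non-envy for agent $i$ requires $n\cdot u_i(M_i)\ge u_i(M)$, i.e.\ $u_i(M)\le rn$. But $\E[u_i(M)]=m\mu$ with $\mu<1$, and for $m\le (r+1)n$ one has $m\mu\le (r+1)\mu n$, which can easily be below $rn$; so the event is not even rare, let alone of probability $\ll m^{-r}$ as your union bound over $\binom{m}{\le r}$ choices of $M_i$ would require. Focusing on a single designated agent throws away the $n^{\Theta(1)}$ many ``small'' agents whose independent envy events are what drive the probability down to $\exp(-n^{1+\Theta(\epsilon)})$.

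In short: restore the naive union bound over allocations, and for a fixed allocation multiply the single-pair bounds $\Pr[u_i(M_{i'})\le r]\le 1-(\theta/(r+1)^q)^{r+1}$ over all pairs with $z_i\le r<z_{i'}$; the choice $c=\Theta(\epsilon\theta/q)$ then makes $(1-\gamma)^{n^{1+\epsilon}/r^2}\le n^{-2m}$.
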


We remark that since we only require the distribution to be polynomially bounded below, the assumption $q\geq 1$ does not introduce a loss of generality. Next, we give an overview of the proof of Theorem~\ref{thm:non-existence}; the full proof can be found in Appendix~\ref{app:non-existence}.

The proof is based on the first moment method; the key is to show that for any fixed allocation, the probability (over the random utilities drawn) that it is envy-free is $\ll 1/n^m$. Since there are $n^m$ possible allocations, the union bound implies that with high probability, no envy-free allocation exists.

To give an intuition for this bound, let us consider a simplified setting where $m = (r + 0.5)n$ and the distribution $\mathcal{U}$ is uniform on $[0, 1]$. Intuitively, the ``more balanced'' the allocation is, the harder it is to bound the probability that the allocation is envy-free. Following this intuition, let us consider the ``most balanced'' allocation where $0.5n$ agents receive $r + 1$ items and the remaining agents receive $r$ items. The key observation is that, for the allocation to be envy-free for every agent in the latter group, any such agent must have utility at most $r$ for the $r+1$ items in the bundle of any agent in the first group. For a fixed agent in the second group and a fixed agent in the first group, this happens with probability at most $1 - 1/(r + 1)^{r+1}$. Indeed, if each of the $r+1$ items yields utility at least $r/(r+1)$ to the agent, the requirement is not satisfied. Now, since there are $0.25n^2$ such pairs of agents, the probability that this fixed allocation is envy-free is at most $\left(1 - \frac{1}{(r + 1)^{r+1}}\right)^{0.25n^2} = \exp\left(\Theta\left(\frac{-n^2}{(r + 1)^{r + 1}}\right)\right)$. Hence, as long as $r \ll \log n/\log \log n$, this term is at most, say, $\exp(-n^{1.9})$, which is indeed much smaller than $n^{-m}$.

The full proof proceeds along the lines of the argument above, but we need to be more careful as we must also deal with other ``less balanced'' allocations.

\section{Discussion}

In this paper, we study the existence and non-existence of envy-free allocations and essentially close the gap left open by \cite{DickersonGoKa14} with regard to the transition between the two phases. On the positive side, we show that if the number of items is a multiple of the number of agents, an envy-free allocation is likely to exist as long as the former quantity is at least twice the latter. On the negative side, we show that if the number of items is not ``close to'' being a multiple of the number of agents, an envy-free allocation is unlikely to exist even when the former quantity exceeds the latter by almost a logarithmic factor. Both of our results make use of several new ideas that may be useful for other problems in fair division.

As we mentioned earlier, all of the works on the asymptotic existence of fair allocations thus far have assumed that agents are endowed with additive utilities. While additivity provides a reasonable trade-off between simplicity and expressiveness, it would be interesting to establish analogous results that hold for more general classes of utilities. Going beyond additivity introduces several complications; for example, the welfare-maximizing allocation is no longer simply the one that assigns every item to the agent who values it most, and giving an agent several goods that she values highly does not guarantee that the agent will also have a correspondingly high value for the whole bundle. Nevertheless, a starting point may be to prove results for specific distributions over utilities from a well-structured class such as that of submodular valuations.

Another possible avenue for future work is to consider the setting where instead of allocating items to individual agents, we divide them among \emph{groups} of agents~\citep{Suksompong18-2}. The agents in each group share the same set of items but may have different preferences. This is the case, for example, when dividing household goods among families or resources between departments in a university.  
\cite{ManurangsiSu17} generalized the results of \cite{DickersonGoKa14} to the group setting and left a logarithmic gap between existence and non-existence. We are hopeful that the techniques we introduce in the present work will help towards closing this gap as well.

\section*{Acknowledgments}

This work was partially supported by NSF Awards CCF-1655215, CCF-1813188, CCF-1815434, and by a Stanford Graduate Fellowship. We would like to thank the anonymous reviewers for their helpful comments.

\bibliography{envyfree}
\bibliographystyle{aaai}

\newpage

\appendix

\section*{\centering Appendix:\\ When Do Envy-Free Allocations Exist?}
\bigskip

\section{Proof of Lemma~\ref{lem:exp-edge}}
\label{app:exp-edge}

To prove Lemma~\ref{lem:exp-edge}, we will use the Chernoff bound, which is stated for convenience below.

\begin{proposition}[Chernoff bound] \label{prop:chernoff}
Let $X_1,X_2, \dots, X_r$ be i.i.d. random variables that take on values in the interval $[0, 1]$, and let $X:=X_1 + \dots + X_r$. For every $\delta \geq 0$, we have
\begin{align*}
\Pr[X \leq (1 - \delta)\E[X]] &\leq \exp{\left(\frac{-\delta^2\E[X]}{2}\right)}.
\end{align*}
\end{proposition}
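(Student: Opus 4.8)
The plan is to prove this lower-tail bound by the standard exponential-moment (Chernoff) method, i.e.\ by applying Markov's inequality to $e^{-tX}$ for a well-chosen parameter $t>0$. Writing $\mu:=\E[X]$ and $p:=\E[X_1]=\dots=\E[X_r]$ (so that $\mu=rp$ by the i.i.d.\ assumption), I would first note that for any $t>0$,
\[
\Pr[X\leq(1-\delta)\mu]=\Pr[e^{-tX}\geq e^{-t(1-\delta)\mu}]\leq e^{t(1-\delta)\mu}\,\E[e^{-tX}],
\]
where the inequality is Markov's inequality applied to the nonnegative random variable $e^{-tX}$. By independence, $\E[e^{-tX}]=\prod_{i=1}^r\E[e^{-tX_i}]$, so it remains to control a single factor.

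The key step, and where the bounded-support assumption $X_i\in[0,1]$ enters, is to bound $\E[e^{-tX_i}]$ by a convexity (chord) argument. Since $x\mapsto e^{-tx}$ is convex, on $[0,1]$ it lies below the chord joining $(0,1)$ and $(1,e^{-t})$, giving $e^{-tx}\leq 1-(1-e^{-t})x$ for all $x\in[0,1]$. Taking expectations and using $1+y\leq e^y$ then yields $\E[e^{-tX_i}]\leq 1-(1-e^{-t})p\leq \exp\!\big(-(1-e^{-t})p\big)$, and hence $\E[e^{-tX}]\leq\exp\!\big(-(1-e^{-t})\mu\big)$. Combining with the first display gives, for every $t>0$,
\[
\Pr[X\leq(1-\delta)\mu]\leq\exp\!\big(\mu\,[\,t(1-\delta)-(1-e^{-t})\,]\big).
\]

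Next I would optimize over $t$. For $\delta\in(0,1)$ the minimizing choice is $t=\ln\frac{1}{1-\delta}$ (so that $e^{-t}=1-\delta$), which reduces the exponent to $\mu\,[-(1-\delta)\ln(1-\delta)-\delta]$, i.e.\ the classical bound $\big(e^{-\delta}/(1-\delta)^{1-\delta}\big)^{\mu}$. The final—and only genuinely non-mechanical—step is to simplify this into the clean form $\exp(-\delta^2\mu/2)$; I expect this to be the main obstacle of the writeup. It amounts to the scalar inequality $(1-\delta)\ln(1-\delta)\geq \delta^2/2-\delta$ on $[0,1)$, which I would prove by setting $f(\delta):=(1-\delta)\ln(1-\delta)+\delta-\delta^2/2$ and checking $f(0)=0$, $f'(\delta)=-\ln(1-\delta)-\delta$ with $f'(0)=0$, and $f''(\delta)=\delta/(1-\delta)\geq 0$ on $[0,1)$; the nonnegativity of $f''$ forces $f'$ to be nondecreasing, hence $f'\geq 0$, hence $f\geq 0$, which is exactly what is needed.

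Finally, the boundary cases are immediate and I would dispose of them in a sentence. For $\delta=0$ the right-hand side equals $1$, so the bound is vacuous. For $\delta\geq 1$ we have $(1-\delta)\mu\leq 0$ while $X\geq 0$; thus the probability is $0$ when $\delta>1$ (and $\mu>0$), and for $\delta=1$ letting $t\to\infty$ in the general bound above gives $\Pr[X\leq 0]\leq e^{-\mu}\leq e^{-\mu/2}$, so the claimed inequality holds in all cases.
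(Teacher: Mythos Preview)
Your proof is correct and is the standard exponential-moment derivation of the lower-tail Chernoff bound; each step (Markov on $e^{-tX}$, the chord bound from convexity, the optimization in $t$, and the scalar inequality $(1-\delta)\ln(1-\delta)\geq \delta^2/2-\delta$ via the second-derivative check) is valid, and the boundary cases are handled properly.

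There is nothing to compare against, however: the paper does not prove this proposition. It simply states the Chernoff bound as a known fact and then invokes it in the proof of Lemma~\ref{lem:exp-edge}. So your write-up goes beyond what the paper does; it supplies a self-contained proof where the paper is content to cite the result.
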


\begin{proof}[Proof of Lemma~\ref{lem:exp-edge}]
If $S=\emptyset$, we must have $|T|=m+1$, which is impossible since $|B|=m$. Fix a subset $\emptyset \ne S \subseteq A$ and $T \subseteq B$ such that $|T| = m - rs + 1$, where $s:=|S|$. We will compute the probability that $S, T$ violate~\eqref{eq:edge-concen}. 

Observe that $|E_G(S, T)|$ is simply a sum of $|S| |T|$ i.i.d. Bernoulli random variables that take on the value 1 with probability $p$. Hence, by Proposition~\ref{prop:chernoff}, we have
\begin{align} \label{eq:edge-chernoff}
\Pr\left[|E_G(S, T)| \leq \frac{p|S||T|}{2}\right] \leq \exp\left(-\frac{p|S||T|}{8}\right).
\end{align}

Now, observe that 
\begin{align} 
p|S||T| &= p \cdot \min\{|S|, |T|\} \cdot \max\{|S|, |T|\} \nonumber \\
&\geq p \cdot \min\{|S|, |T|\} \cdot \frac{r|S| + |T|}{r + 1} \nonumber \\
&= p \cdot \min\{|S|, |T|\} \cdot \frac{m + 1}{r + 1} \nonumber \\
&\geq p \cdot \min\{|S|, |T|\} \cdot n/2 \nonumber \\
&\geq \left(32\log m\right) \cdot \min\{|S|, |T|\} \label{eq:expected-edges}.
\end{align}
Thus, by combining~\eqref{eq:edge-chernoff} and~\eqref{eq:expected-edges}, we get
\begin{align*}
&\Pr\left[|E_G(S, T)| \leq \left(16\log m\right) \cdot \min\{|S|, |T|\}\right] \\
&\leq \exp\left(-\frac{p|S||T|}{8}\right).
\end{align*}

By the union bound, the probability that there exist $S, T$ that violate~\eqref{eq:edge-concen} is at most
\begin{align*}
&\sum_{s=1}^n \exp\left(\frac{-p s (m -rs + 1)}{8}\right) \binom{n}{s} \binom{m}{m - rs + 1} \\
&\leq \sum_{s=1}^n \exp\left(\frac{-p s (m -rs + 1)}{8}\right) n^{\min\{s, n - s\}}\\
& \quad\cdot m^{\min\{rs - 1, m - rs + 1\}} \\
&= \sum_{s=1}^n \exp\biggl(\frac{-p s (m -rs + 1)}{8} + \log n \cdot \min\{s, n - s\} \\
& \quad+ \log m \cdot \min\{rs - 1, m - rs + 1\}\biggr) \\
&\leq \sum_{s=1}^n \exp\biggl(\frac{-p s (m -rs + 1)}{8} \\
&\quad+ 2\log m \cdot \min\{rs, m - rs + 1\}\biggr) \\
&= \sum_{s=1}^n \exp\biggl(\left(\frac{-(p/r) \max\{rs, m -rs + 1\}}{8} + 2\log m\right) \\
&\quad \cdot \min\{rs, m - rs + 1\}\biggr) \\
&\leq \sum_{s=1}^n \exp\biggl(\left(\frac{-(p/r) \cdot (m/2)}{8} + 2\log m\right) \\
&\quad \cdot \min\{rs, m - rs + 1\}\biggr) \\
&\leq \sum_{s=1}^n \exp\biggl(\left(-4\log m + 2\log m\right) \cdot \min\{rs, m - rs + 1\}\biggr) \\
&\leq \sum_{s=1}^n \exp\left(-2\log m\right) \\
&\leq 1/n,
\end{align*}
which concludes the proof.
\end{proof}

\section{Proof of Theorem~\ref{thm:non-existence}}
\label{app:non-existence}

Let $c = 0.1\epsilon\theta/q$. We will show that, for any sufficiently large $n$ and any allocation $\phi: M \to N$,
\begin{align} \label{eq:one-allocation}
\Pr[\phi \text{ is envy-free}] \leq n^{-2m}.
\end{align}
Since there are $n^m$ different allocations, the union bound implies that the probability that an envy-free allocation exists is at most $n^{-m} = o(1)$. Hence, it suffices for us to show \eqref{eq:one-allocation}.

To prove \eqref{eq:one-allocation}, observe that for any fixed $\phi$, the probability that $\phi$ is envy-free for each agent is independent. That is,
\begin{align}
&\Pr[\phi \text{ is envy-free}] \nonumber \\
&= \prod_{i \in N} \Pr[\phi \text{ is envy-free for agent } i]. \label{eq:all-agent}
\end{align}

Let $z_i = |\phi^{-1}(i)|$ denote the number of items that agent $i$ receives. Since agent $i$ values her items at most $z_i$ in total, if $\phi$ is envy-free for $i$, then $i$ must value items allocated to any other agent $i'$ at most $z_i$. More precisely, this implies that
\begin{align}
&\Pr[\phi \text{ is envy-free for agent } i] \nonumber \\
 &\leq \Pr\left[\forall i' \in N, \sum_{j \in \phi^{-1}(i')} u_i(j) \leq z_i\right] \nonumber \\
&= \prod_{i' \in N} \Pr\left[\sum_{j \in \phi^{-1}(i')} u_i(j) \leq z_i\right]. \label{eq:one-agent}
\end{align} 

Next, we use the assumption that $\mathcal{U}$ is $(\theta, q)$-polynomially bounded below at 1 to derive an upper bound for the probability that $\sum_{j \in \phi^{-1}(i')} u_i(j) \leq z_i$, for $i$ and $i'$ such that $z_i \leq r < z_{i'}$. 

\begin{lemma}
\label{lem:compare-sum}
Let $\rho:=\sqrt[4]{1 - \left(\theta \left(\frac{1}{r + 1}\right)^q\right)^{r + 1}}$. For every $i, i' \in N$ such that $z_i \leq r < z_{i'}$, we have
\begin{align}
\Pr\left[\sum_{j \in \phi^{-1}(i')} u_i(j) \leq z_i\right] \leq \rho^{z_{i'}/r}. \nonumber
\end{align}
\end{lemma}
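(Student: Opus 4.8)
The plan is to exploit the imbalance in the two bundles: agent $i'$ holds $z_{i'} > r$ items, and since $z_{i'}$ is an integer this forces $z_{i'} \ge r+1$. I would partition the bundle $\phi^{-1}(i')$ into $k := \lfloor z_{i'}/(r+1)\rfloor \ge 1$ pairwise disjoint ``blocks'' of size exactly $r+1$ each, discarding any leftover items. The crucial deterministic observation is that within any single block, if all $r+1$ of its items give agent $i$ utility strictly more than $1 - \frac{1}{r+1} = \frac{r}{r+1}$, then that block alone is worth more than $(r+1)\cdot\frac{r}{r+1} = r \ge z_i$ to $i$ (using $z_i \le r$), so the whole bundle is worth strictly more than $z_i$. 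Hence the event $\{\sum_{j \in \phi^{-1}(i')} u_i(j) \le z_i\}$ forces every block to contain at least one item that $i$ values at most $\frac{r}{r+1}$.

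Next I would convert this into a probability estimate. Let $A_\ell$ be the event that agent $i$ values every item of block $\ell$ strictly above $\frac{r}{r+1}$. The blocks are disjoint and the utilities $u_i(j)$ are independent across $j$, so the events $A_1,\dots,A_k$ are mutually independent; by the $(\theta,q)$-polynomially-bounded-below hypothesis applied with $\alpha = 1/(r+1)$, each item clears this threshold with probability at least $\theta(1/(r+1))^q$, giving $\Pr[A_\ell] \ge (\theta(1/(r+1))^q)^{r+1} = 1 - \rho^4$, i.e.\ $\Pr[\overline{A_\ell}] \le \rho^4$. The deterministic observation above then yields $\Pr[\sum_{j\in\phi^{-1}(i')}u_i(j)\le z_i] \le \Pr[\bigcap_\ell \overline{A_\ell}] = \prod_\ell \Pr[\overline{A_\ell}] \le \rho^{4k}$.

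Finally I would tidy up the exponent. First, $\rho \in [0,1]$ because $\theta(1/(r+1))^q$ is a lower bound on a probability and hence at most $1$, so $\rho^4 = 1 - (\theta(1/(r+1))^q)^{r+1} \in [0,1]$. Second, since $z_{i'}/(r+1) \ge 1$ one has $k = \lfloor z_{i'}/(r+1)\rfloor \ge z_{i'}/(2(r+1))$, and since $r \ge 1$ implies $2r \ge r+1$, one obtains $4k \ge 2z_{i'}/(r+1) \ge z_{i'}/r$. Combined with $0 \le \rho \le 1$, this upgrades $\rho^{4k}$ to the claimed bound $\rho^{z_{i'}/r}$.

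I do not expect a genuine obstacle here. The only places needing a little care are the deterministic block argument (keeping track that $z_i \le r$ so that a full block already exceeds the allowed total) and the elementary floor estimate $\lfloor z_{i'}/(r+1)\rfloor \ge z_{i'}/(2(r+1))$ for integers with $z_{i'} \ge r+1$; the factor-of-$4$ slack built into the definition of $\rho$ is precisely what makes the passage from exponent $4k$ to exponent $z_{i'}/r$ go through.
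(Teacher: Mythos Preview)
Your proposal is correct and follows essentially the same argument as the paper: partition $\phi^{-1}(i')$ into $\lfloor z_{i'}/(r+1)\rfloor$ blocks of size at least (or exactly) $r+1$, use the $(\theta,q)$-polynomial lower bound to show each block independently exceeds $r \ge z_i$ with probability at least $1-\rho^4$, and then convert the exponent $4\lfloor z_{i'}/(r+1)\rfloor$ to $z_{i'}/r$ via the same floor estimate and the inequality $2(r+1)\le 4r$. The only cosmetic differences are that the paper keeps all items (blocks of size $\ge r+1$) rather than discarding leftovers, and phrases the per-block bound through an auxiliary quantity $\gamma$ before passing to $\rho^4$.
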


Note that since $\mathcal{U}$ is $(\theta,q)$-polynomially bounded below at 1, we must have $\theta\leq 1$, and therefore $\left(\theta \left(\frac{1}{r + 1}\right)^q\right)^{r + 1}\leq 1$.

\begin{proof}
To prove this bound, let $\gamma$ be the probability that the sum of $r + 1$ i.i.d. random variables drawn from $\mathcal{U}$ exceeds $r$. This probability is at least the probability that all of the $r + 1$ random variables are of value more than $r/(r + 1)$. Thus, we have
\begin{align*}
\gamma &\geq \left(\Pr_{u \sim \mathcal{U}}\left[u > \frac{r}{r+1}\right]\right)^{r + 1} \\
&\geq \left(\theta \left(\frac{1}{r + 1}\right)^q\right)^{r + 1} = 1 - \rho^4
\end{align*}
where the second inequality follows from our assumption on $\mathcal{U}$.

Let $\zeta := \left\lfloor \frac{z_{i'}}{r + 1}\right\rfloor\geq 1$. We next partition the items assigned to $i'$ into $\zeta$ parts each of size at least $r + 1$, i.e., let $\phi^{-1}(i') = S_1 \cup \cdots \cup S_{\zeta}$, where the sets $S_t$ are all disjoint and contain at least $r + 1$ items each. Observe that the probabilities that the sum in each part exceeds $r$ are independent, which means that
\begin{align*}
&\Pr\left[\sum_{j \in \phi^{-1}(i')} u_i(j) \leq z_i\right] \\
&\leq \Pr\left[\sum_{j \in \phi^{-1}(i')} u_i(j) \leq r\right] \\
&\leq \Pr\left[\forall t \in \left[\zeta\right], \sum_{j \in S_t} u_i(j) \leq r\right] \\
&= \prod_{t = 1}^{\zeta} \Pr\left[\sum_{j \in S_t} u_i(j) \leq r\right].
\end{align*} 
Now, note that for any $t=1,2,\dots,\zeta$, the probability that $\sum_{j \in S_t} u_i(j) \leq r$ is at most the probability that the sum of $r + 1$ i.i.d. random variables sampled according to $\mathcal{U}$ is at most $r$. Hence, by definition of $\gamma, \rho$ and the bound established above, we have
\begin{align*}
\Pr\left[\sum_{j \in \phi^{-1}(i')} u_i(j) \leq z_i\right] &\leq \left(1 - \gamma\right)^{\zeta} \\
&\leq (1 - \gamma)^{\frac{z_{i'}}{4r}} \leq \rho^{z_{i'}/r},
\end{align*}
as desired.
\end{proof}

Combining Lemma~\ref{lem:compare-sum} with~\eqref{eq:all-agent} and~\eqref{eq:one-agent} yields
\begin{align*}
\Pr[\phi \text{ is envy-free}] &\leq \prod_{i \in N \atop z_i \leq r} \prod_{i' \in N \atop z_{i'} > r} \rho^{z_{i'}/r} \\
&= \prod_{i \in N \atop z_i \leq r} \rho^{\left(\sum_{i' \in N \text{ s.t. } z_{i'} > r} z_{i'}\right)/r}.
\end{align*}
Let $\ell := m - rn$ be the remainder upon dividing $m$ by $n$. The sum in the exponent can be rearranged as
\begin{align*}
\sum_{i' \in N \atop z_{i'} > r} z_{i'} = m - \sum_{i' \in N \atop z_{i'} \leq r} z_{i'} \geq m - rn = \ell.
\end{align*} 
Thus, we have
\begin{align*}
\Pr[\phi \text{ is envy-free}] \leq \prod_{i \in N \atop z_i \leq r} \rho^{\ell/r} = \rho^{(\ell/r) \cdot |\{i \in N \mid z_i \leq r\}|}.
\end{align*}
Now, observe that $|\{i \in N \mid z_i \leq r\}|$ is at least $n - \frac{m}{r + 1} = \frac{n - \ell}{r + 1}$, which implies that
\begin{align}
\Pr[\phi \text{ is envy-free}] &\leq \rho^{\frac{\ell (n - \ell)}{r(r + 1)}} \nonumber \\
&\leq \rho^{\frac{n \cdot \min\{\ell, n - \ell\}}{2r(r + 1)}} \leq \rho^{\frac{n^{1+\epsilon}}{2r(r + 1)}}. \label{eq:tmp1}
\end{align}

Finally, we can bound $\rho$ as follows:
\begin{align}
\rho &= \sqrt[4]{1 - \left(\theta \left(\frac{1}{r + 1}\right)^q\right)^{r + 1}} \nonumber \\
&\leq \sqrt[4]{1 - \left(\frac{\theta}{r + 1}\right)^{q(r + 1)}} \nonumber \\
&\leq \sqrt[4]{1 - \left(\frac{\theta}{2r}\right)^{2qr}} \nonumber \\
&\leq \sqrt[4]{1 - \left(\frac{\theta}{2c\log n/\log \log n}\right)^{4qc\log n/\log \log n}} \nonumber \\
&= \sqrt[4]{1 - \left(\frac{q\log\log n}{0.2\epsilon\log n}\right)^{0.4\epsilon\theta\log n/\log \log n}} \nonumber \\
&\leq \sqrt[4]{1 - \left(1/\log n\right)^{0.4\epsilon\theta\log n/\log \log n}} \nonumber \\
&\leq \sqrt[4]{1 - \left(1/\log n\right)^{0.5\epsilon \log n/\log \log n}} \nonumber \\
&= \sqrt[4]{1 - n^{-0.5\epsilon}} \nonumber \\
&\leq e^{-n^{-0.5\epsilon}/4}, \label{eq:rho} 
\end{align}
where the first inequality follows from $\theta\leq 1$ and the last inequality follows from the estimate $1-x\leq e^{-x}$, which holds for any real number $x$.

Combining~\eqref{eq:tmp1} and~\eqref{eq:rho}, we get
\begin{align*}
\Pr[\phi \text{ is envy-free}] &\leq e^{-\frac{n^{1 + 0.5\epsilon}}{8r(r + 1)}} = n^{-\frac{n^{1 + 0.5\epsilon}}{8r(r + 1)\log n}} ,
\end{align*}
which is at most $n^{-2m}$ for sufficiently large $n$. This yields~\eqref{eq:one-allocation} and concludes our proof.

\end{document}